\def\qed{\hfill$\Box$\par\vskip1em}
\begin{document}

\title*{Stand-Up Indulgent Gathering on Lines \\
for Myopic Luminous Robots}
\author{Quentin Bramas
\and Hirotsugu Kakugawa
\and Sayaka Kamei
\and Anissa Lamani
\and \\
Fukuhito Ooshita
\and
Masahiro Shibata
\and 
S\'ebastien Tixeuil
}
\authorrunning{Q. Bramas, H. Kakugawa, S. Kamei, A. Lamani, F. Ooshita, M. Shibata, and S. Tixeuil}

\institute{
Quentin Bramas \at University of Strasbourg, ICube, CNRS, France. \email{bramas@unistra.fr}
\and
Hirotsugu Kakugawa \at Ryukoku University, Japan. \email{kakugawa@rins.ryukoku.ac.jp}
\and
Sayaka Kamei \at Hiroshima University, Japan. \email{s10kamei@hiroshima-u.ac.jp}
\and
Anissa Lamani \at University of Strasbourg, ICube, CNRS, France. \email{alamani@unistra.fr}
\and 
Fukuhito Ooshita \at Fukui University of Technology, Japan. \email{f-oosita@fukui-ut.ac.jp}
\and
Masahiro Shibata\at Kyushu Institute of Technology, Japan. \email{shibata@csn.kyutech.ac.jp}
\and
S\'ebastien Tixeuil\at Sorbonne University, CNRS, LIP6, IUF, France. \email{sebastien.tixeuil@lip6.fr} }

\maketitle

\abstract{We consider a strong variant of the crash fault-tolerant gathering problem called stand-up indulgent gathering (SUIG), by robots endowed with limited visibility sensors and lights on line-shaped networks. In this problem, a group of mobile robots must eventually gather at a single location, not known beforehand, regardless of the occurrence of crashes.
Differently from previous work that considered unlimited visibility, we assume that robots can observe nodes only within a certain fixed distance (that is, they are myopic), and emit a visible color from a fixed set (that is, they are luminous), without multiplicity detection.
We consider algorithms depending on two parameters related to the initial configuration: $M_{init}$, which denotes the number of nodes between two border nodes, and $O_{init}$, which denotes the number of nodes hosting robots. 
Then, a border node is a node hosting one or more robots that cannot see other robots on at least one side. 
Our main contribution is to prove that, if $M_{init}$ or $O_{init}$ is odd, SUIG can be solved in the fully synchronous model. 
\keywords{Crash failure, fault-tolerance, LCM robot model, limited visibility, light}
}

\section{Introduction}

The Distributed Computing research community actively studies mobile robot swarms, aiming to characterize what conditions make it possible for robots that are confused (each robot has its own ego-centered coordinate system), forgetful (robots may not remember all their past actions) to autonomously move around and solve global problems~\cite{SuzukiY99}. One of these conditions is about how the robots coordinate their actions~\cite{PGN2019}: robots can either act all together (FSYNC), act whenever they want (ASYNC), or act in subsets (SSYNC).

One of the problems that researchers have explored is the \emph{gathering problem}, which serves as a standard for comparison~\cite{SuzukiY99}. It is easy to state (robots must meet at the same place in a finite amount of time, without knowing where it is beforehand), but hard to solve (two robots that move according to SSYNC scheduling cannot meet in finite time~\cite{CourtieuRTU15}, unless there are more assumptions).

Robot failures become more likely as the number of robots increases, or if robots are deployed in dangerous environments, but few studies address this issue~\cite{DefagoPT19}. A crash fault is a simple type of failure, where a robot stops following its protocol unexpectedly. For the gathering problem, the desired outcome in case of crash faults must be specified. There are two options: \emph{weak gathering} requires all non-faulty robots to meet, ignoring the faulty ones, while \emph{strong gathering} (or \emph{stand-up indulgent gathering – SUIG}) requires all non-faulty robots to meet at the unique crash location.  
We believe that SUIG is an attractive task for difficult situations such as dangerous environments: for example, various repair parts could be transported by different robots, and if a robot crashes, the other ones may rescue and repair it after robots carrying relevant parts are gathered at the crash location.

In continuous Euclidean space, weak gathering is solvable in the SSYNC model~\cite{ND2006,ZSS2013,QS2015,XMP2020}, while SUIG (and its variant with two robots, stand up indulgent rendezvous -- SUIR) is only solvable in the FSYNC model~\cite{QAS2020,QAS2021,BramasLT23}. 

Some researchers have recently switched from studying robots in a continuous space to a discrete one~\cite{PGN2019}. In a discrete space, robots can only be in certain locations and move to adjacent ones. This can be modeled by a graph where the nodes are locations, hence the term “robots on graphs”. 
A discrete space is more realistic for modeling physical constraints or discrete sensors~\cite{TPRLSX19}. However, it is not equivalent to a continuous space in terms of computation: a discrete space has fewer possible robot positions, but a continuous space gives more options to resolve difficult situations (e.g., by moving slightly to break a symmetry).

To our knowledge, SUIG in a discrete setting was only considered under the assumption that robots have infinite range visibility (that is, their sensors are able to obtain the position of all other robots in the system that participate to the gathering) in line-shaped networks. Such powerful sensors may seem unrealistic, paving the way for more practical solutions.
With infinite visibility, Bramas et al.~\cite{BramasKLT23} showed that the SUIG problem is solvable in the FSYNC model only.

When infinite range visibility is no longer available, robots become unable to distinguish global configuration situations and act accordingly, in particular, robots may react differently to different local situations, yielding in possible synchronization issues~\cite{KLO14,OPODIS2019}.
In this paper, we consider the discrete setting, and aim to characterize the solvability of the SUIG problem when robots have limited visibility (that is, they are myopic) yet are endowed with visible lights taking colors from a finite set (that is, they are luminous). In particular, we are interested in the trade-off between the visibility range (how many hops away can we see other robots positions) and the memory and communication capacity of the robots (each robot can have a finite number of states that may be communicated to other robots in its visibility range).
In more details, we study SUIG algorithms that depend on two parameters of the initial configuration: $M_{init}$ and $O_{init}$. The former is the number of nodes between two border nodes, and the latter is the number of nodes with robots between two border nodes. A border node has at least one robot and no robots on one or both sides. 
We show that SUIG is solvable in the FSYNC model when either $M_{init}$ or $O_{init}$ is odd.

\section{Model}\label{sec:model}
We consider robots that evolve on a line shaped network. The length of the line is infinite in both directions, and consists of an infinite number of nodes $\ldots, u_{-2}, u_{-1}, u_0, u_1, u_2, \ldots$, such that a node $u_i$ is connected to both $u_{(i-1)}$ and $u_{(i+1)}$.

Let $\mathcal{R}=\{r_1, r_2, \dots, r_n\}$ be the set of $n\geq 2$ autonomous robots.
Robots are assumed to be anonymous (i.e., they are indistinguishable), uniform (i.e., they all execute the same program, and use no localized parameter such as a particular orientation), oblivious (i.e., they cannot remember their past actions), and disoriented (i.e., they cannot distinguish left and right).
Then, we assume that robots do \emph{not} know the number of robots.

A node is considered \emph{occupied} if it contains at least one robot; otherwise, it is \emph{empty}. 
If a node contains more than one robot, it is said to have a \textit{tower} or \textit{multiplicity}. 

The \emph{distance} between two \emph{nodes} $u_i$ and $u_j$ is the number of edges between them.
The \emph{distance} between two \emph{robots} $r_p$ and $r_q$ is the distance between two nodes occupied by $r_p$ and $r_q$.
Two robots or two nodes are \emph{adjacent} if the distance between them is one.
Two robots are \emph{neighboring} if there is no robot between them.

Each robot $r_i$ maintains a variable $L_i$, called {\em light}, which spans a finite set of states called \emph{colors}. 
We call such robots \emph{luminous robots}.
A light is \emph{persistent} from one computational cycle to the next: the color is not automatically reset at the end of the cycle. Let $L$ denote the number of available light colors.
Let $L_i(t)$ be the light color of $r_i$ at time $t$. 
We assume the \emph{full light} model: each robot $r_i$ can see the light color of other robots, but also its own light color.
Robots are unable to communicate with each other explicitly (\emph{e.g.}, by sending messages), however, they can observe their environment, including the positions (i.e., occupied nodes) and colors of other robots. 

The ability to detect towers is called \emph{multiplicity detection}, which can be either \emph{global} (any robot can sense a tower on any node) or \emph{local} (a robot can only sense a tower if it is part of it). 
If robots can determine the number of robots in a sensed tower, they are said to have \emph{strong} multiplicity detection.
We assume that robots do \emph{not} have multiplicity detection capability even on their current node but still can sense the visible colors: if there are multiple robots $r_1, r_2,\ldots r_k$ in a node $u$, 
an observing robot $r$ can detect only colors $\{L_i(t)| 1 \leq i \leq k\}$.
So, $r$ can detect there are multiple robots at $u$ if and only if at least two robots among $r_1, r_2,\ldots r_k$ have different colors. 
However, $r$ cannot know how many robots are located in $u$ even if it observe a single color or multiple colors at $u$.

We assume that robots are \emph{myopic}. That is, they have limited visibility: an observing robot $r$ at node $u$ can only sense the robots that occupy nodes within a certain distance, denoted by $\phi$, from $u$. As robots are identical, they share the same $\phi$.

Let $\mathcal{X}_i(t)$ be the set of colors of robots located in node $u_i$ at time $t$. 
If a robot $r_j$ located at $u_i$ takes a snapshot at $t$, the sensor of $r_j$ outputs a sequence, $\mathcal{V}_j$, of $2\phi+1$ sets of colors: 
$$\mathcal{V}_j\equiv\;\mathcal{X}_{i-\phi}(t), \ldots , \mathcal{X}_{i-1}(t), [\mathcal{X}_i(t)], \mathcal{X}_{i+1}(t), \ldots, \mathcal{X}_{i+\phi}(t).$$
This sequence $\mathcal{V}_j$ is the \emph{view} of $r_j$ at $u_i$. 
To distinguish the sequence center, we use square brackets.
If the sequence $\mathcal{X}_{i+1}, \ldots , \mathcal{X}_{i+\phi}$ is equal to the sequence  $\mathcal{X}_{i-1}, \ldots , \mathcal{X}_{i-\phi}$, then the view $\mathcal{V}_j$ of $r_j$ is \emph{symmetric}.
Otherwise, it is \emph{asymmetric}.  
In $\mathcal{V}_j$, a node $u_k$ is \emph{occupied} at time $t$ whenever $|\mathcal{X}_k(t)|>0$.
Conversely, if $u_k$ is \emph{empty} at $t$, 
then $\mathcal{X}_{k}(t)=\emptyset$ holds.

If there exists a node $u_i$ such that $|\mathcal{X}_{i}(t)|=1$ holds, $u_i$ is \emph{singly-colored}.
Note that $|\mathcal{X}_{i}(t)|$ denotes the number of colors at node $u_i$, thus even if $u_i$ is singly-colored, it may be occupied by multiple robots (sharing the same color).
Now, if a node $u_i$ is such that $|\mathcal{X}_{i}(t)|>1$ holds, $u_i$ is \emph{multiply-colored}. As each robot has a single color, a multiply-colored node always hosts more than one robot.

In the case of a robot $r_j$ located at a singly-colored node $u_i$, 
$[\mathcal{X}_i(t)]$ in $r_j$'s view $\mathcal{V}_j$ can be written as $[L_j]$.
Then, without loss of generality, if the left adjacent node of $u_i$ contains one or more robots with color $L_k$, and the right adjacent node of $u_i$ contains one or more robots with color $L_l$, while $u_i$ only hosts $r_j$, then $\mathcal{V}_j$ can be written as $L_k[L_j]L_l$. 
Now, if robot $r_j$ at node $u_i$ occupies a multiply-colored position (with two other robots $r_k$ and $r_l$ having distinct colors), then $|\mathcal{X}_i(t)|=3$, and we can write $\mathcal{X}_i(t)$ in $\mathcal{V}_j$ as 
$\begin{bmatrix}
L_k\\
L_l\\
[L_j]
\end{bmatrix}$.
When the observed node in the view is with multiple colors, we use brackets to distinguish the current position of the observing robot in the view and the inner bracket to explicitly state the observing robot's color.
Note that, because we assume that robots do not have multiplicity detection capability, at $u_i$, there may be two or more robots with $L_k$ and $L_l$ respectively, and there may be two or more robots with $L_j$ other than $r_j$.

Our algorithms are driven by observations made on the current view of a robot, so we use \emph{view predicates}: a Boolean function based on the current view of the robot.
The predicate $L_j$ matches any set of colors that includes color $L_j$, while predicate $(L_j,L_k)$ matches any set of colors that contains $L_j$ or $L_k$. 
Now the predicate  
$\begin{pmatrix}
L_1\\
L_2
\end{pmatrix}$ matches any set that contains \emph{both} $L_1$ and $L_2$. Some of our algorithm rules expect that a node is singly-colored, \emph{e.g.,} with color $L_k$, in that case, the corresponding predicate is denoted by $L_k!$.
To express predicates in a less explicit way, we use character `?' to represent any set, including the empty set. 
The $\neg$ operator is used to negate a particular predicate $P$ (so, $\neg P$ returns \emph{false} whenever $P$ returns \emph{true} and vice versa).
Then, the predicate  
$\begin{pmatrix}
\lnot L_1!\\
\lnot L_2!
\end{pmatrix}$ matches any set that is neither singly-colored $L_1$ nor singly-colored $L_2$.
Also, the superscript notation $P^{y}$ represents a sequence of $y$ consecutive sets of colors, each satisfying predicate $P$. Observe that $y \leq \phi$.
In a given configuration, if the view of a robot $r_j$ at node $u_i$ satisfies predicate ${\emptyset}^{\phi}[?]$ or predicate $[?]{\emptyset}^{\phi}$, then $r_j$ is a \emph{border robot} and $u_i$ a \emph{border node}.

At each time instant $t$, robots occupy nodes, and their positions and colors form a \emph{configuration} $C(t)$ of the system.
Then, each robot $r$ executes Look-Compute-Move cycles infinitely many times: $(i)$ first, $r$ takes a snapshot of the environment and obtains an ego-centered view of the current configuration (Look phase), $(ii)$ according to its view, $r$ decides to move or to stay idle and possibly changes its light color (Compute phase), $(iii)$ if $r$ decided to move, it moves to one of its adjacent nodes depending on the choice made in the Compute phase (Move phase). 
We consider the \emph{FSYNC} model in which at each \emph{round}, each robot $r$ executes an LCM cycle synchronously with all the other robots.
We also consider the \emph{SSYNC} model where a nonempty
subset of robots chosen by an adversarial scheduler executes an LCM cycle synchronously, at each round.
At time instant $t=0$, let $H_{init}$ be the maximum distance between neighboring occupied nodes, $M_{init}$ be the number of nodes between two borders including border nodes, and $O_{init}(\leq M_{init})$ be the number of occupied nodes.
We assume that $\phi \geq H_{init}\geq 1$, i.e., the visibility graph is connected.
As previously stated, no robot is aware of $H_{init}$, $M_{init}$ and $O_{init}$.

In this paper, each rule in the proposed algorithms is presented in the similar notation as in \cite{SF-Ex}: 
$<Label>$ $:$ $<Guard>$ $::$ $<Statement>$. 
The guard is a predicate on the view $\mathcal{V}_j = \mathcal{X}_{i-\phi}, \ldots ,\mathcal{X}_{i-1}, [\mathcal{X}_i],
\mathcal{X}_{i+1}, \ldots , \mathcal{X}_{i+\phi}$
obtained by robot $r_j$ at node $u_i$ during the Look phase. If the predicate evaluates to \emph{true}, $r_j$ is \emph{enabled}, otherwise, $r_j$ is \emph{disabled}.
In the first case, the corresponding rule $<Label>$ is also said to be {\em enabled}.
If a robot $r_j$ is enabled, $r_j$ may change its color and then move based on the corresponding statement during its subsequent Compute and Move phases.
The statement is a pair of ({\it New color}, {\it Movement}).
{\it Movement} can be
($i$) $\rightarrow$, meaning that $r_j$ moves towards node $u_{i+1}$, 
($ii$) $\leftarrow$, meaning that $r_j$ moves towards node $u_{i-1}$, and
($iii$) $\bot$, meaning that $r_j$ does not move.
For simplicity, when $r_j$ does not move (resp. $r_j$ does not change its color), we omit {\it Movement} (resp. {\it New color}) in the statement.
The label $<Label>$ is 
denoted as R followed by a non-negative
integer (\emph{i.e.}, R0, R1, etc.) where a smaller label indicates higher priority.
If the integer in the label is followed by an alphabet (\emph{i.e.}, R1a, R1b, etc.), the priority is determined by the lexicographic order.

\medskip
\noindent\textbf{Problem definition.}
A robot is said to be \emph{crashed} at time instant $t$ if it stops executing at any time $t^\prime \geq t$.
That is, a crashed robot stops execution and remains with the same color at the same position indefinitely.
We assume that robots cannot identify a crashed robot in their snapshots (i.e., they are able to see the crashed robots but remain unaware of their crashed status).
A crash, if any, can occur at any phase of the execution, and break the LCM-atomic (i.e., it can occur the end of round, but also between Look phase and Compute phase or between Compute phase and Move phase).
More than one crash can occur, however we assume that all crashes occur at the same node. 
In our model, since robots do not have multiplicity detection capability, a node with a single crashed robot and with multiple crashed robots with the same color are indistinguishable. Similarly, multiple robots with the same color at the same node have the same behavior, but some or all of them can crash.

We consider the \emph{Stand Up Indulgent Gathering} (SUIG) problem defined in \cite{QAS2021}. 
An algorithm solves the SUIG problem if, for any initial configuration $C_0$ (that may contain multiplicities), 
and for any execution ${\cal E}=(C_0, C_1,\dots)$, there exists a round $t$ such that all robots (including the crashed robot, if any) gather at a single node, not known beforehand, for all $t^\prime\geq t$.
Note that, if there are multiple crashed nodes, the problem cannot be solved. Thus, we need to assume that all the crashes occur at the same node.

Because we assume that robots are anonymous and uniform, all robots have the same color in the initial configuration. 

\section{Impossibility Results}

Several impossibility results from the literature hint at which situations are solvable for our problem. Theorem~\ref{T15} and Corollaries~\ref{C4}--\ref{C14} are for the case where robots have no lights.

\begin{theorem}[\cite{REA2008}]\label{T15}
The gathering problem is unsolvable in FSYNC on line
networks starting from an edge-symmetric configuration even if robots can see all the positions of the other robots with global strong multiplicity detection.
\end{theorem}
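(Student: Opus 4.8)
The plan is to run the classical symmetry-preservation argument, exploiting that robots are anonymous, uniform, oblivious and disoriented. Recall that a configuration is \emph{edge-symmetric} when there is a reflection $\sigma$ of the line whose axis passes through the midpoint of some edge $(u_i,u_{i+1})$ — so $\sigma$ fixes \emph{no} node — and the multiset of occupied nodes is invariant under $\sigma$. Fix such a $C_0$ and assume, for contradiction, that some algorithm $\mathcal{A}$ solves gathering in FSYNC from $C_0$. Since we are in FSYNC, at every round every robot performs a full LCM cycle, so the only adversarial freedom is the choice, for each robot, of a local orientation (which, on a line, is exactly a choice of left/right — this is what ``disoriented'' grants the adversary).

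The core is a one-step lemma: if $C_t$ is $\sigma$-invariant, then the adversary can pick orientations so that $C_{t+1}$ is $\sigma$-invariant too. Pair the occupied nodes as $\{v,\sigma(v)\}$ (note $v\neq\sigma(v)$ always, as $\sigma$ fixes no node); to a robot at $v$ given orientation $o$, assign a robot at $\sigma(v)$ the mirrored orientation $\sigma(o)$. Then the Look phase from $\sigma(v)$ returns precisely the $\sigma$-image of the Look phase from $v$ — and this survives global strong multiplicity detection and unlimited visibility, since the whole snapshot (positions and multiplicities) seen from $\sigma(v)$ is the $\sigma$-image of the one seen from $v$. Because the robots are oblivious and carry no lights, the Compute phase is a function of the current view alone; so $\mathcal{A}$ prescribes the same move in the two robots' respective local frames, which in the global frame are two moves exchanged by $\sigma$. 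Robots sharing a node have identical views and hence move identically, so no multiplicity can be ``split'' asymmetrically. Therefore $C_{t+1}$ is again $\sigma$-invariant.

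By induction, every configuration of the execution is $\sigma$-invariant. But a gathered configuration has all robots on a single node $w$, forcing $\sigma(w)=w$, which is impossible since the edge-symmetry axis fixes no node; equivalently, a $\sigma$-invariant configuration always contains two distinct occupied nodes $v$ and $\sigma(v)$. Hence this execution never reaches gathering, contradicting the correctness of $\mathcal{A}$, and the theorem follows.

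The only delicate point is the one-step lemma: one must verify carefully that ``mirror-image views yield mirror-image moves'' holds for the full LCM model — that the Look output from $\sigma(v)$ is literally $\sigma$ applied to the Look output from $v$, that Compute depends on nothing but that view (true here, as robots are oblivious and lightless), and that Move is then fully determined — and one must treat towers explicitly so that multiplicities cannot be desymmetrized. Everything else (the induction and the observation that a one-node configuration is node-symmetric, not edge-symmetric) is immediate.
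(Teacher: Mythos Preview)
The paper does not prove this theorem; it is quoted verbatim from \cite{REA2008} and immediately used to derive Corollaries~\ref{C4} and~\ref{C14}, so there is no in-paper argument to compare against.

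Your symmetry-preservation argument is the standard one and is essentially correct. One small imprecision: the sentence ``Robots sharing a node have identical views and hence move identically'' is not quite the right justification, because the adversary is free to hand different orientations to distinct robots occupying the same node, and then their views need not coincide. What actually carries the one-step lemma is that, by $\sigma$-invariance of $C_t$, node $v$ and node $\sigma(v)$ host the same number of robots, so the adversary can pair them up and give each pair mirrored orientations; the \emph{multiset} of destinations out of $v$ is then the $\sigma$-image of the multiset of destinations out of $\sigma(v)$, which is exactly what is needed to keep $C_{t+1}$ $\sigma$-invariant. (Alternatively, the adversary may simply give every robot at a given node the same orientation, which makes your sentence true by fiat.) With that clarification your induction and the final contradiction --- a single occupied node would have to be a $\sigma$-fixed node, impossible for an edge reflection --- go through as written.
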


\begin{corollary}[\cite{BramasKLT23}] \label{C4}
The SUIG problem is unsolvable in FSYNC on line networks
starting from an edge-symmetric configuration even
for robots with infinite visibility and global strong multiplicity detection.
\end{corollary}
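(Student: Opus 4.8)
The plan is to derive the corollary immediately from Theorem~\ref{T15}, exploiting the fact that in the absence of crashes the SUIG specification reduces exactly to ordinary gathering. First I would recall that an SUIG algorithm is required to succeed on \emph{every} execution from a given initial configuration, in particular on the execution in which no robot ever crashes; along that execution the algorithm must bring all robots to a common (a priori unknown) node, which is precisely the gathering requirement.

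Then I would argue by contradiction. Suppose an algorithm $\mathcal{A}$ solves SUIG in FSYNC on line networks for robots (without lights) with infinite visibility and global strong multiplicity detection. Fix an edge-symmetric initial configuration $C_0$, that is, one whose axis of symmetry passes through the midpoint of an edge so that no node is fixed by the symmetry. Running $\mathcal{A}$ from $C_0$ in the crash-free execution, the previous observation forces $\mathcal{A}$ to gather all robots on a single node; hence $\mathcal{A}$ would be a gathering algorithm that succeeds from an edge-symmetric configuration in exactly the model of Theorem~\ref{T15} (FSYNC, lines, full visibility, global strong multiplicity detection). This contradicts Theorem~\ref{T15}, so no such $\mathcal{A}$ can exist.

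I do not expect any real obstacle here: the models of Theorem~\ref{T15} and of the corollary coincide, so no capability translation is needed, and the only content is the one-line remark that a crash-free SUIG run is a gathering run. The single point worth spelling out for completeness is why crashes cannot be used to evade the impossibility — since the symmetry axis of $C_0$ lies on an edge, no node is on it, so a configuration that remains edge-symmetric can never collapse to a single node; and an SUIG algorithm must in particular handle the no-crash schedule, which (by the FSYNC symmetry-preservation argument underlying Theorem~\ref{T15}) keeps the configuration edge-symmetric forever.
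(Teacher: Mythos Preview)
Your argument is correct and is exactly the natural derivation: restrict attention to the crash-free execution, observe that SUIG then coincides with ordinary gathering, and invoke Theorem~\ref{T15}. The paper itself does not prove this corollary at all; it is stated as a known result imported from~\cite{BramasKLT23}, so there is no in-paper proof to compare against, but your reduction is the standard one and matches how the corollary is meant to follow from Theorem~\ref{T15}.
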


\begin{corollary}[\cite{OPODIS2019}] \label{C14}
Starting from a configuration where $M_{init}$ is even and $O_{init}$ is even, there exist initial configurations that a deterministic algorithm cannot gather for myopic robots.
\end{corollary}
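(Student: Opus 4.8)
The plan is to exhibit, for even values $M_{init}$ and $O_{init}$, a single initial configuration from which no deterministic algorithm can gather, and then invoke the classical ``symmetry cannot be broken in FSYNC'' argument. Concretely, I would place the robots so that the occupied region is \emph{edge-symmetric}, i.e., symmetric with respect to the perpendicular bisector of some edge $(u_k,u_{k+1})$: the left and right border nodes are made equidistant from that edge's midpoint — which is possible exactly because $M_{init}$ is even — and the $O_{init}$ occupied nodes are grouped into $O_{init}/2$ mirror pairs around the midpoint, each pair hosting the same number of robots with identical (initial) colors — which is possible exactly because $O_{init}$ is even. The simplest instance is two adjacent border nodes ($M_{init}=O_{init}=2$, hence $H_{init}=1\le\phi$), but the pairing construction yields such a configuration for any even values whenever $\phi$ is large enough. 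Note that this configuration respects all model constraints: all robots share a single color initially, there is no node on the symmetry axis, and $\phi\ge H_{init}$ holds.

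Next I would prove by induction on rounds that, in the FSYNC model, edge-symmetry with respect to this fixed axis is preserved throughout the execution. If $C(t)$ is edge-symmetric, then any two robots occupying mirror-image nodes obtain mirror-image views during the Look phase (the full-light model, the absence of multiplicity detection, and the limited visibility range $\phi$ all commute with the mirror map). Since robots are anonymous, uniform, oblivious and \emph{disoriented} — they cannot tell left from right — a deterministic algorithm must prescribe, to a robot with a given view, a color update and a move whose mirror image is exactly the action it prescribes for the mirror-image view. Because all these actions are performed simultaneously in FSYNC and lights are persistent, $C(t+1)$ is again edge-symmetric with respect to the same axis.

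Finally, I would observe that a gathered configuration — all robots, including a crashed one if present, on a single node $u$ — cannot be edge-symmetric, since $u$ would have to be fixed by the reflection, yet an edge's perpendicular bisector fixes no node. Hence the execution starting from our configuration never reaches a gathered configuration, so no deterministic algorithm solves gathering (and, a fortiori, SUIG) from it. As this configuration belongs to the family ``$M_{init}$ even and $O_{init}$ even,'' the corollary follows; it may also be read as an instance of Theorem~\ref{T15} specialized to myopic luminous robots.

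The main obstacle is the inductive step: one must check that the local-view model for myopic luminous robots without multiplicity detection genuinely commutes with the reflection — in particular that the square-bracket/inner-bracket encoding of the view, the persistence of lights, and the FSYNC synchronization all behave correctly under the mirror map — and that the argument never secretly relies on a robot perceiving global information (such as which side the far border lies on) that a myopic robot cannot obtain. The remaining steps (choosing the configuration and ruling out a symmetric gathered configuration) are immediate.
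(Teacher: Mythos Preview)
The paper does not give its own proof of this corollary: it is quoted from~\cite{OPODIS2019} as a known impossibility (and the surrounding text even stipulates that Theorem~\ref{T15} and Corollaries~\ref{C4}--\ref{C14} concern robots \emph{without} lights). So there is nothing in the paper to compare your argument against line by line.

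That said, your proposal is the standard and correct way to establish the result. Exhibiting an edge-symmetric configuration whenever $M_{init}$ and $O_{init}$ are both even, and then showing by induction that in FSYNC disoriented, anonymous, deterministic robots preserve edge-symmetry round after round, is exactly the mechanism behind Theorem~\ref{T15}; your version simply checks that the myopic, luminous view model (sets of colors per node, no multiplicity detection, persistent lights) commutes with the reflection, which it does. Your final observation---that an edge-reflection fixes no node, hence no gathered configuration is edge-symmetric---closes the argument. Two minor remarks: first, you implicitly use that a robot whose view is a palindrome cannot deterministically pick a side, so either it stays or the adversary may orient it so as to preserve symmetry; it would be worth saying this explicitly. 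Second, your argument actually proves slightly more than the corollary as stated in the paper, since you handle lights while the paper only claims the no-light case; this is harmless (impossibility for luminous robots implies impossibility for lightless ones).
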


As above results, we suppose in the following section that either $M_{init}$ or $O_{init}$ is odd, that is, the initial configurations are not edge-symmetric. The following lemma is also for the case where robots have no lights.

\begin{lemma}[\cite{BramasKLT23}]
Even starting from a configuration that is not edge-symmetric, the SUIG problem is unsolvable in SSYNC 
for robots with infinite visibility and global strong multiplicity detection.
\end{lemma}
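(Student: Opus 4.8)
The plan is to adapt, to the discrete line setting with a finite palette of colors, the impossibility argument known for stand-up indulgent rendezvous in continuous space. I would argue by contradiction: assume an algorithm $\mathcal{A}$ solves SUIG in SSYNC with infinite visibility and global strong multiplicity detection, starting from every non-edge-symmetric configuration. It suffices to exhibit one such configuration from which $\mathcal{A}$ fails, and the natural candidate is the two-robot instance in which $r_1$ and $r_2$ occupy nodes at distance $2$ with the (unique) initial color; this configuration is symmetric through the central node, hence \emph{not} edge-symmetric, and it is a legal starting configuration (here $n=2$, $M_{init}=3$, $O_{init}=2$, and SUIG coincides with SUIR).

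First I would establish a rigidity property of the synchronous behaviour: if the scheduler activates both robots at every round, then the configuration stays symmetric, because the two robots keep a common color and mirror-image views, hence perform mirror-image moves and recolorings; and the only symmetric transition that reaches a gathered configuration from a non-gathered one is the one in which the robots sit at distance $2$ and both step onto the central node (from distance $1$ the mirror move is a swap, and from larger distances the center is unreachable in one step or is an edge midpoint). Since ``activate both at every round'' is a legal crash-free SSYNC (indeed FSYNC) execution and $\mathcal{A}$ must gather along it, such a \emph{critical} configuration $C^{\dagger}$ --- distance $2$, common color $c$, with rule ``recolor and move to the center'' --- is necessarily reached (possibly $C^{\dagger}$ is the initial configuration itself).

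The core of the proof is the deviation at $C^{\dagger}$: the adversary activates only $r_1$, producing a distance-$1$ configuration $D$; moreover, since crashes are undetectable, the adversary keeps open the option that the un-activated robot is crashed. From $D$, correctness in the scenario ``$r_2$ crashed'' forces $r_1$ to eventually move toward $r_2$, while correctness in the scenario ``$r_1$ crashed'' forces $r_2$ to eventually move toward $r_1$; since robots are oblivious these are fixed reactions to fixed views, and since the color set is finite only finitely many two-robot configurations (up to translation) are relevant as long as the inter-robot distance stays bounded, whereas an unbounded distance would by itself preclude gathering. A finite-state adversary argument then shows that, from $D$, the scheduler can always select an activation keeping the two robots on distinct nodes --- typically forcing a perpetual swap at distance $1$ or a bounded oscillation through recurring colored configurations --- while remaining fair, contradicting the fact that $\mathcal{A}$ gathers.

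I expect this last step to be the main obstacle: one must rule out \emph{every} escape route of $\mathcal{A}$ out of $D$, including the robots drifting apart and returning along distinct color sequences, the interaction between color changes and the ``move onto the other robot while it stays put'' transition, and the subtleties of a crash occurring between the Look, Compute, and Move phases. Making the ``bounded reachable configuration space'' argument fully precise --- and fixing the exact SSYNC fairness convention under which the adversarial schedule is admissible --- is where the real work lies.
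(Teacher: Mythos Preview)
This lemma is cited from~\cite{BramasKLT23} and is not proved in the paper; note also that the text introducing it specifies robots \emph{without} lights, whereas your sketch assumes a finite palette of colours. The paper does, however, prove the strictly stronger Lemma~\ref{lem:impossible} (infinitely many colours) immediately afterwards, and that is the natural comparison point. Your approach---specialise to $n=2$, run the FSYNC schedule to a critical distance-$2$ configuration, desynchronise one robot, then mount a finite-state adversary argument from the resulting distance-$1$ configuration---differs markedly from the paper's. The paper argues for arbitrary $n$ and arbitrary colour space: any pre-gathering configuration can be driven by the SSYNC scheduler to one with exactly two adjacent occupied nodes $u_1,u_2$; there, if no robot at $u_1$ is ordered to move then crashing all robots at $u_2$ blocks gathering, so each side must contain a mover; activating exactly one mover from each side then keeps at least two nodes occupied, and the invariant repeats indefinitely. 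This one-step ``two occupied nodes cannot collapse to one'' argument replaces your entire endgame, needs no finiteness of the colour set, and sidesteps precisely the fairness and recurrence issues you flag as the main obstacle. For the actual no-light statement your obstacle in fact dissolves---at distance~$1$ the two robots have identical (disoriented) views, so the algorithm's reaction is a single fixed choice among stay/toward/away, each handled in one line (stay: crash the other; toward: activate both, they swap forever; away: activate both and recurse on distance~$3$)---so your plan does go through, but the paper's argument is both shorter and strictly more general.
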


\begin{lemma}\label{lem:impossible}
Even starting from a configuration that is not edge-symmetric, the SUIG problem is unsolvable in SSYNC for infinite visibility, global strong multiplicity detection, infinite colors luminous robots.
\end{lemma}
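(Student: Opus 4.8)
\noindent The plan is to reuse the adversarial construction behind the cited Lemma for robots without lights and to check that it is insensitive to recolorings. Assume for contradiction that an algorithm $A$ solves SUIG in SSYNC for robots with infinite visibility, global strong multiplicity detection, and an unbounded set of colors. It suffices to exhibit one non-edge-symmetric initial configuration, one fair SSYNC scheduler, and a choice of if, when, and where a robot crashes, under which $A$ never gathers. I would take the two-robot configuration with $a$ on $u_0$ and $b$ on $u_2$: here $M_{init}=3$, so the configuration is not edge-symmetric, and since every crash must occur on a single node the task is exactly stand-up indulgent rendezvous together with the crash-free gathering requirement.

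The adversary uses a sequential scheduler (a legitimate fair SSYNC scheduler), so at most one robot moves per round; this rules out the ``both robots jump onto the common midpoint'' pattern, and hence $A$ can gather only by having one robot step onto the node of the other. I would then maintain the invariant that after every round the two robots occupy distinct nodes. The preservation step is the crux: whenever the execution reaches a distance-$1$ configuration in which the robot about to act, say $b$, would step onto $a$'s node, the adversary instead crashes $b$ between its Look and Compute phases -- so $b$ neither moves nor recolors and stays frozen at its node with its current color -- and from then on activates only the survivor. Correctness of $A$ forces the survivor to reach the crash node, and examining how it does so pins down the move the survivor makes when it sees, at distance $1$, a neighbour carrying the frozen color. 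I would then argue that the crash scenarios constrain this move symmetrically enough that, back in the crash-free execution, the adversary can reach a distance-$1$ configuration in which each robot's rule tells it to step onto the other; activating both then swaps the robots and restores distance $1$, so gathering is postponed forever, contradicting correctness of $A$.

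The delicate point -- and the only place where having infinitely many colors matters -- is ruling out that $A$ uses its palette to make irreversible progress: the approaching robot could cycle through ever-new colors, effectively counting down to the decisive step, so that the symmetric distance-$1$ configuration the adversary needs never re-materializes. To close this I would rely on three facts: a crashed robot's color is frozen at whatever it held when it crashed, so the survivor's later behaviour is driven by a configuration in which one side is color-static; the two crash scenarios (``$a$ crashed'' versus ``$b$ crashed'') are mirror images of one another and, since robots are disoriented and start with a common color, the survivor's color sequence is the same up to mirroring in both; and the adversary can interleave these two scenarios inside a single crash-free execution, using the slack in the sequential schedule to resynchronize the two robots' colors whenever they threaten to drift apart. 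Carrying out this last step -- a bookkeeping of which color sequences can appear on each robot in the crash-free execution versus the two crashed ones, and a proof that the execution can always be steered back to a symmetric distance-$1$ configuration -- is where the bulk of the work lies; the remainder follows the cited light-free Lemma essentially verbatim. Since the invariant then persists forever, $A$ never gathers, a contradiction, and the lemma follows.
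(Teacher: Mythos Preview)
Your proposal leaves open precisely the step you flag as delicate, and that step is not a matter of bookkeeping: with a strictly sequential schedule the two robots' colors can diverge arbitrarily, and with an unbounded palette nothing forces the execution back into a symmetric distance-$1$ configuration. The sketch ``interleave the two crash scenarios and resynchronize the colors using slack in the schedule'' does not come with a mechanism, and it is not clear one exists---each solo activation lets the activated robot see the other's current color and react to it, so the two color trajectories are coupled in a way that need not be mirror-symmetric. As written, the argument is a plan rather than a proof.

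The paper's proof avoids this difficulty entirely by \emph{not} restricting to a sequential scheduler. The key move is that in SSYNC the adversary may activate any nonempty subset, and in particular may activate \emph{exactly one robot from each of the two occupied nodes simultaneously}. The crash threat is used pointwise: in any two-node configuration (whatever the colors), if no robot on $u_1$ is commanded to move then crashing $u_2$ blocks gathering, so some robot on each side must be a mover. The scheduler picks one mover per side and fires both; the outcomes are a swap, or a configuration with at least two occupied nodes. No symmetry between the two sides is required and no color history is tracked---the argument applies afresh to whatever two-node configuration arises next (and one is always reached again because a correct algorithm must approach gathering through such a configuration). Replacing your sequential scheduler by this ``one mover from each side'' activation is what makes the infinite palette irrelevant.
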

\begin{proof}
Let us suppose for the purpose of contradiction that there exists an algorithm $A$ that solves SUIG for infinite visibility and global strong multiplicity detection luminous robots with an infinite number of colors in SSYNC.
Consider the configuration $C$ that occurs just before gathering is achieved. Now, configuration $C$ has either three consecutive occupied nodes (let us call this configuration class $C_3$) or two consecutive occupied nodes (let us call this configuration class $C_2$). In a configuration in $C_3$, the border robots must be ordered to move inwards by $A$, otherwise gathering is not achieved in the next configuration. From a configuration in $C_3$, the SSYNC scheduler may select only one of the border robots for execution, then reaching a configuration in $C_2$.
So, for any algorithm $A$ that solves $SUIG$, an SSYNC scheduler can reach a configuration in $C_2$. In the sequel, we show that we may never reduce the number of occupied nodes in any execution that starts from a configuration in $C_2$, and hence the gathering is not solved.

Assume that we are in a configuration in $C_2$. Let $k_1$ denote the number of robots on the first occupied node $u_1$, and $k_2$ the number of robots on the second occupied node $u_2$. 
Suppose now that the particular combination of colors at both nodes yields all robots at $u_1$ not to move. Then, we can crash robots at $u_2$. As a result, gathering is never achieved, as the configuration remains in $C_2$ forever. The same argument holds for robots at $u_2$. As a result, algorithm $A$ must command at least one robot at each node to move. 
Now, the scheduler executes those two robots (from the two nodes) that move. Either they both move inwards (exchanging their nodes) and the configuration remains in $C_2$, or at least one of them moves inwards and the resulting configuration remains in $C_2$, or another configuration with more occupied nodes and possibly holes. In any case, the number of occupied nodes is not reduced from two to one, so one can again construct an execution that reaches a configuration in $C_2$, and repeat the argument forever. 
Hence, algorithm $A$ does not solve SUIG, a contradiction.\qed
\end{proof}

As per Lemma~\ref{lem:impossible}, we assume the FSYNC model in the following section.

\section{Possibility Results for Myopic Robots}

\subsection{The case where $M_{init}$ is odd}
In this case, we show that the gathering is achieved even if robots do not have lights. 
For this purpose, in the following we assume that all robots have a single color W (White) which they do not change.
The strategy of our algorithm is as follows:
The robots on two border nodes move towards other occupied node.
The formal description is shown in Algorithm~\ref{alg1}.

\begin{lemma}
Starting from a configuration $C$ where $M_{init}$ is odd, even if there is a crashed robot, all robots gather in $O(M_{init})$ rounds by Algorithm~\ref{alg1}.
\end{lemma}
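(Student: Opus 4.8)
The plan is to reduce the whole argument to tracking the leftmost and rightmost occupied nodes, using that Algorithm~\ref{alg1} never changes a robot's color and only ever moves the (at most two) border robots one step inward. The first step is a \emph{connectivity invariant}: in every configuration reachable from $C$, any two consecutive occupied nodes are at distance at most $H_{init}\le\phi$. I would prove it by induction on the round. It holds at time $0$ by definition of $H_{init}$. For the inductive step, interior robots do not move; a border robot that moves inward only shrinks the gap to the next occupied node (merging into it when that gap was $1$); and the one delicate configuration is the \emph{split} produced when a sweeping border reaches the node hosting the crashed robot: the non-crashed robots there step one further node inward while the crashed robot stays behind, creating a new gap of exactly $1$. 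That gap cannot afterwards grow, since the node now holding those robots has the crashed robot at distance $1$ on one side and another occupied node within $\phi$ on the other, hence is no longer a border and stays frozen until the opposite sweep absorbs it. A corollary of the invariant is that, in every reachable configuration, the only border nodes are the leftmost and rightmost occupied ones, and every non-crashed border robot sees an occupied node strictly on its inward side, so its inward direction is well defined.

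Let $b_L<b_R$ be the two initial border nodes, so that $b_R-b_L=M_{init}-1$, and let $a(t),b(t)$ denote the positions of the leftmost and rightmost occupied nodes at round $t$. By the corollary the robots on $a(t)$ move right and those on $b(t)$ move left, unless that node hosts the crashed robot; hence $a$ is non-decreasing with $a(t+1)=a(t)+1$ except when $a(t)$ hosts a crashed robot, in which case $a(t+1)=a(t)$, and symmetrically for $b$. If no crash ever occurs, then $a(t)=b_L+t$ and $b(t)=b_R-t$ until gathering, and since $M_{init}$ is odd, $b_R-b_L$ is even, so $a(t)=b(t)$ holds exactly at $t=(M_{init}-1)/2$, where every robot occupies the single node $(b_L+b_R)/2$. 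If a crash is present, let $c$ be its location; then $a(t)$ increases by one per round until it reaches $c$ and stays there, $b(t)$ decreases by one per round until it reaches $c$ and stays there, and $a(t)\le c\le b(t)$ throughout, so once both have reached $c$ we have $a(t)=b(t)=c$ and all robots, the crashed one included, are gathered at $c$. A crash occurring in the middle of the execution only freezes one end earlier, which can only decrease the count, so in every case gathering is reached after at most $M_{init}-1$ rounds, that is, in $O(M_{init})$ rounds. Finally, once a single node is occupied no robot sees another robot, no rule of Algorithm~\ref{alg1} is enabled, and the configuration is stable forever, so the SUIG requirement holds.

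The main obstacle is the connectivity invariant in the presence of the crash: ruling out that the frozen crash location drifts out of visibility range of the rest of the swarm before the opposite sweep reaches it. This comes down to a short case analysis on where the crashed robot sits --- at a border of $C$, strictly in the interior of $C$, or frozen mid-execution --- checking in each case that the gap created around it never exceeds $H_{init}\le\phi$ and that the single extra step taken by the non-crashed robots past it does not destabilize the subsequent sweep. Everything after that is routine bookkeeping of the two extreme positions together with the parity of $M_{init}$.
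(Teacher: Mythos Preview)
Your proof is correct and follows essentially the same approach as the paper: track the two extreme occupied positions as they move inward by one node per round, use the parity of $M_{init}$ for the no-crash case, and in the crash case observe that each extreme freezes once it reaches the crash location, giving gathering in at most $M_{init}-1$ rounds. You are more careful than the paper in explicitly isolating the connectivity invariant (so that each border always sees an occupied node on its inward side, including after the one-step ``overshoot'' past the crash); the paper's proof simply asserts that the surviving border ``moves towards $u_i$'' and leaves this verification implicit.
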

\begin{proof}
Because we assume FSYNC model, if there is no crashed robot, it is clear that all robots gather on the central node between initial borders in $\lfloor M_{init}/2 \rfloor$ rounds.
If a border robot $r_b$ crashed at time $t<\lfloor M_{init}/2 \rfloor$ on a node $u_i$, it stops at $u_i$. 
If there are other (non-crashed) border robots on $u_i$ at $t$, they move toward other occupied node, thus they are in $u_{i+1}$ at $t+1$.
After that, they cannot move before they become border robots.
On the other hand, the other border robots move towards $u_{i}$.
Thus, eventually, they arrive at $u_{i+1}$ at $M_{init}-2(t+1)+t$-th round, and robots on $u_{i+1}$ become a border.
After that, all robot at $u_{i+1}$ move to $u_i$, and the gathering is achieved.\qed
\end{proof}

\begin{algorithm}[t]
/* Do nothing after gathering. */\\
R0: $\emptyset^\phi[W!]\emptyset^\phi$ :: $\bot$\\

/* Border robots move. */\\
R1: $\emptyset^\phi[W!](\neg(\emptyset^{\phi}))$ :: $\rightarrow$
\caption{Algorithm for the case where $M_{init}$ is odd.}
\label{alg1}
\end{algorithm}

\subsection{The case where $O_{init}$ is odd}

In this section, we show that the gathering is achieved if robots have lights with three colors: W (White), R (Red) and B (Blue).
We assume that all robots have the same color White in the initial configuration.
The formal description is shown in Algorithm~\ref{alg2}.

The transition diagram of configurations by the algorithm in the case that no crash occurs represents in Figs.~\ref{fig:border}--\ref{fig:both3}. In these figures, each small blue box represents a node, and each circle represents the set of robots with the color W, R and B.
The robots represented by doubly lined circles are enabled.
If no crash occurs during the execution, 
the strategy of our algorithm is as follows:
Initially, all robots are White, and robots on two border nodes become Red in the first round by rule R1 (See Fig.~\ref{fig:border}(a)).
The robots on two border nodes move towards other occupied node.
Then, the border robots keep their lights Red or Blue, then the algorithm can recognize that they are border robots.
If there exists a White robot in the adjacent node, the border robot changes its color to Blue or Red by rule R2b or R3b (See Fig.~\ref{fig:border}(b)$\rightarrow$(c),(e)$\rightarrow$(f)).
Otherwise, it just moves without changing its color by rule R2a or R3a (See Fig.~\ref{fig:border}(a),(d)).
When non-border White robots become border, they change their color to Red (resp. Blue) by rule R4a (resp. R4b) if borders that join the node have Red (resp. Blue) (See Fig.~\ref{fig:border}(f)$\rightarrow$(a) or (b) (resp. (c)$\rightarrow$(d) or (e))).
We say that White robots are \emph{captured} by a border if a border moves to the node occupied by the White robots.
When a border node becomes singly-colored, the border robot moves toward other occupied nodes.
Eventually, two borders are neighboring and they have Blue and Red respectively because $O_{init}$ is odd. 
To achieve the gathering, depending on the initial occupied nodes, one of the followings occurs.
\begin{itemize}
\item Case 1: If two borders are singly-colored, the distance between them is two and the central node is empty, both borders move to the central node by rules R2a and R3a (See Fig.~\ref{fig:both}(b))
\item Case 2: If two borders are adjacent and singly-colored, then Blue robots join Red robots by rule R3a at the same time that Red robots become Blue by rule R4c (See Fig.~\ref{fig:both2}(c)).
\item Case 3: If two borders are adjacent, one border has White and Red (resp. Blue) robots and the other border is singly-colored Blue (resp. Red), then White robots become Red (resp. Blue) by rule R4a (resp. R4b) and the singly-colored Blue (resp. Red) border moves to the adjacent border by R3a (resp. R2a) (See Fig.~\ref{fig:one}(b) (resp. (c)).
\item Case 4: If two borders are singly-colored Blue (resp. Red), the distance between them is two and the central node is occupied by White robots, then both borders move to the central node by rule R3b (resp. R2b) (See Fig.~\ref{fig:both3}(b) (resp. (d))).
\end{itemize}
During the execution, if White robot crashes, one border eventually stops executing at the crashed node, but the other border can join the crashed border by rule R2a or R3a.
For the case where Red or Blue robots crash, by special rules R5a--R5c, we are able to respond to various failure patterns.

\begin{algorithm}[t]
{\bf Colors}\\
W (White), R (Red), B (Blue)\\

{\bf Rules}\\
/* Do nothing after gathering. */\\
R0: $\emptyset^\phi[?]\emptyset^\phi$ :: $\bot$\\

/* Start by the initial border robots. */\\
R1: $\emptyset^\phi[W!](\neg\emptyset^{\phi})$ :: $R$\\

/* Border robots on singly-colored nodes move inwards. */\\
R2a: $\emptyset^\phi[R!]
\begin{pmatrix}
\neg W!\\
\neg B!
\end{pmatrix}
(?^{\phi-1})$ :: $\rightarrow$\\

R2b: $\emptyset^\phi[R!](W!)(?^{\phi-1})$ :: $B, \rightarrow$\\

R3a: $\emptyset^\phi[B!](\neg W!)(?^{\phi-1})$ :: $\rightarrow$\\

R3b: $\emptyset^\phi[B!](W!)(?^{\phi-1})$ :: $R, \rightarrow$\\

/* When White robots become border robots, they change their color to the same color as the border robots. */\\
R4a: $\emptyset^\phi\begin{bmatrix}
R\\
[W]
\end{bmatrix}
(?^{\phi})$ :: $R$\\

R4b: $\emptyset^\phi\begin{bmatrix}
B\\
[W]
\end{bmatrix}(?^{\phi})$ :: $B$\\

R4c: $\emptyset^\phi[R!](B!)(\emptyset^{\phi-1})$ :: $B$\\

/* Only for the case that Blue or Red robot crashes. */\\

R5a: $\emptyset^\phi\begin{bmatrix}
R\\
B\\
[W]
\end{bmatrix}(R!,B!)(\emptyset^{\phi-1})$ :: $R$

R5b: $\emptyset^\phi\begin{bmatrix}
B\\
[R]
\end{bmatrix}(R!,B!)(\emptyset^{\phi-1})$ :: $\rightarrow$\\

R5c: $\emptyset^\phi\begin{bmatrix}
R\\
[B]
\end{bmatrix}(R!,B!)(\emptyset^{\phi-1})$ :: $\rightarrow$

\caption{Algorithm for the case where $O_{init}$ is odd. }
\label{alg2}
\end{algorithm}

\begin{figure}[t]
 \begin{minipage}[t]{\linewidth}
 \centering
    \includegraphics[scale=0.5]{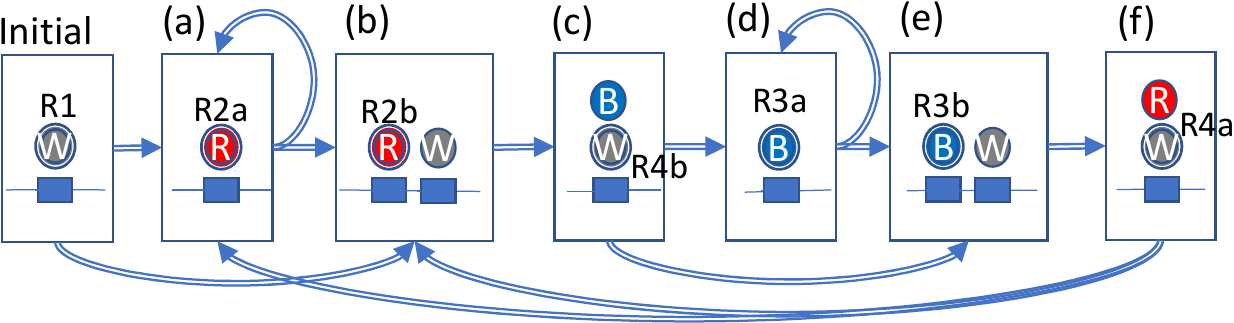}
    \caption{Execution of border robots before the number of occupied nodes becomes three.}
    \label{fig:border}
\end{minipage}\\
\begin{minipage}[t]{0.5\linewidth}
\centering
    \includegraphics[scale=0.5]{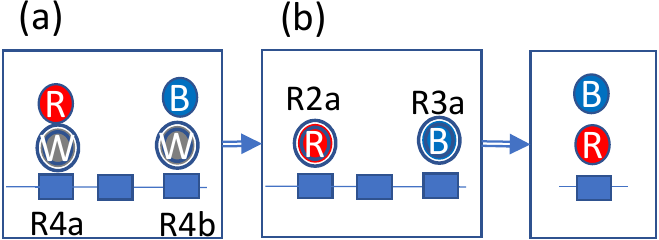}
    \caption{Execution of Case 1.}
    \label{fig:both}
\end{minipage}
\begin{minipage}[t]{0.5\linewidth}
\centering
    \includegraphics[scale=0.5]{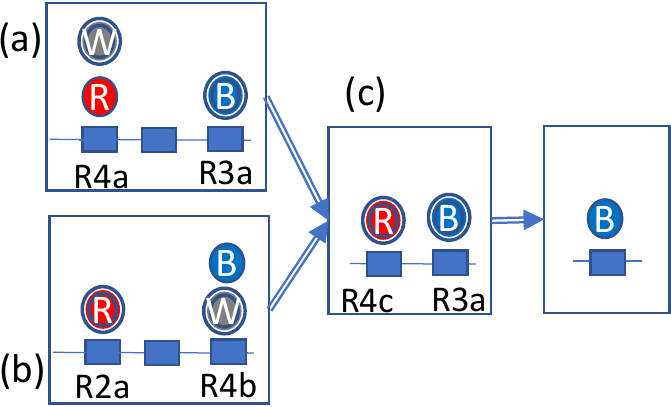}
    \caption{Execution of Case 2.}
    \label{fig:both2}
\end{minipage}\\
\begin{minipage}[t]{\linewidth}
\centering
    \includegraphics[scale=0.5]{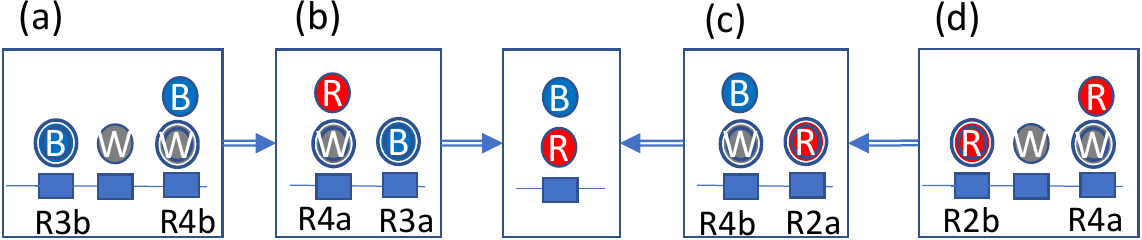}
    \caption{Execution of Case 3.}
    \label{fig:one}
\end{minipage}
\begin{minipage}[t]{\linewidth}
\centering
    \includegraphics[scale=0.5]{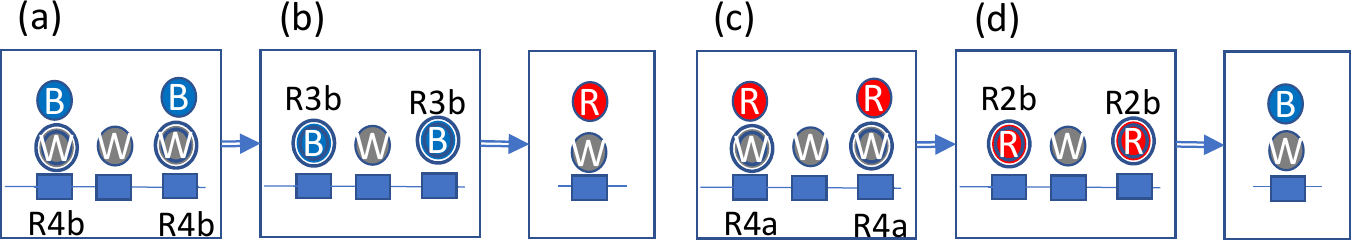}
    \caption{Execution of Case 4.}
    \label{fig:both3}
\end{minipage}
\end{figure}

We prove the correctness of Algorithm~\ref{alg2}. 

\begin{lemma}\label{no-crash}
Starting from a configuration $C$ where $O_{init}$ is odd, if no robot crashes, all robots gather in $O(M_{init})$ rounds by Algorithm~\ref{alg2}.
\end{lemma}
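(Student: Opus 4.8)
The plan is to track the evolution of the configuration and show that the number of occupied nodes strictly decreases at a controlled rate until gathering is complete, following the case analysis already laid out in Figures~\ref{fig:border}--\ref{fig:both3}. First I would establish an invariant that is maintained at every round after the first: exactly the two extremal occupied nodes are ``border'' nodes hosting only non-White robots (Red or Blue), every interior occupied node is singly-colored White, and the two borders cannot be the same node until gathering. The base case is the first round, where rule R1 fires simultaneously at both initial borders (which are the unique nodes satisfying $\emptyset^\phi[W!](\neg\emptyset^\phi)$) and recolors them Red; since $\phi \ge H_{init}$, every occupied node is visible from a neighbor, so no interior White robot is ever enabled by R1, R2, R3, or R5, and R4 only fires when a border has just arrived. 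I would then check, rule by rule, that each of R2a, R2b, R3a, R3b, R4a, R4b, R4c preserves this invariant: a border robot either moves one step inward keeping its color (when the next node is non-White, R2a/R3a), or moves inward onto a White node while the White robots there are simultaneously recolored by R4a/R4b and the border flips color via R2b/R3b --- in FSYNC these happen in the same round, so the newly occupied configuration again has the border structure. The symmetry between Red and Blue, and the alternation of colors as successive layers of White robots are ``captured,'' is exactly what encodes the parity of $O_{init}$.

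Second, I would set up the termination argument. Define the potential $\Phi(C)$ to be the number of occupied nodes between the two current borders. As long as $\Phi \ge 4$, at least one border is singly-colored and strictly interior-bounded, so R2a or R3a (or R2b/R3b if the adjacent node is White) is enabled at each border; I would argue that in each such round at least one of the two borders moves inward, possibly merging with an interior node but never increasing $\Phi$, and that over $O(M_{init})$ rounds $\Phi$ drops to $3$ or $2$. The key sub-claim here is that a border never gets ``stuck'': the guard of R2a/R2b (resp. R3a/R3b) covers every possible content of the adjacent inward node (White, Red, Blue, or mixed), because $\neg W! $ and $W!$ partition all nonempty singly/multiply-colored possibilities, and R4c handles the Red-adjacent-to-Blue case. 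Once $\Phi \in \{2,3\}$, the two borders are neighboring and --- by the invariant together with the parity of $O_{init}$ being odd --- they carry colors Red and Blue respectively (never two Reds or two Blues, since each color flip alternates and an odd number of captured layers forces opposite colors). Then I would invoke the four explicit terminal cases from the algorithm description: Case~1 ($\Phi=3$, both singly-colored, center empty: both move in by R2a/R3a), Case~2 ($\Phi=2$, both singly-colored: R3a moves Blue onto Red while R4c recolors), Case~3 ($\Phi=2$, one border mixed White+Red/Blue: R4a/R4b recolors the White, the other singly-colored border moves in by R3a/R2a), and Case~4 ($\Phi=3$, both singly-colored, center White: both move in by R3b/R2b onto the center). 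In each case one checks directly from the rule guards and FSYNC synchrony that the next configuration has all robots on one node, after which R0 keeps them there forever.

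Third, I would assemble the round count: one round for the initial R1 coloring, then $O(M_{init})$ rounds for the ``shrinking'' phase (each border travels at most $M_{init}$ steps, and the potential decreases roughly every round or two), then $O(1)$ rounds for the terminal case. Summing gives $O(M_{init})$.

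The main obstacle I anticipate is the case analysis establishing that the invariant is \emph{exactly} preserved --- in particular, verifying that no unintended rule fires in parallel. Because several rules can be simultaneously enabled at different robots in FSYNC, I must rule out pathological interactions: e.g., when a border with color Red moves onto a White interior node, I need that node's other robots to be recolored by R4a in the \emph{same} round (so the arriving configuration is legal), and I need to be sure that the robot(s) already at that node were not themselves enabled by some other rule that would move them away or miscolor them. This requires carefully confronting every rule's guard against every configuration type that can arise under the invariant, checking the one-hop view $\emptyset^\phi[\cdot]\cdots$ prefix in each, and using $\phi \ge H_{init}$ crucially to guarantee that interior gaps are always within sight so that the $\emptyset^\phi$ prefix genuinely characterizes borders. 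Handling the moment when $\Phi$ transitions from $4$ to the terminal regime --- making sure we land in exactly one of Cases 1--4 and not in some uncovered configuration --- is the delicate endgame of the argument; I would enumerate the possible $(\text{distance}, \text{colorings})$ pairs for two neighboring borders with an odd number of interior layers and match each to its case.
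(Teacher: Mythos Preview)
Your overall architecture---an invariant on border structure, a potential that shrinks, and the four terminal cases---matches the paper's argument, but there is a concrete timing error that breaks your stated invariant. You claim that when a singly-colored border executes R2b/R3b onto an adjacent White node, the White robots there are ``simultaneously recolored by R4a/R4b \ldots\ in FSYNC these happen in the same round.'' They do not. At that round the White robots at the target node still see the border robot occupying the \emph{neighboring} node in their Look phase, so their own node is not yet a border (the $\emptyset^\phi$ prefix in the guard of R4a/R4b fails) and they are disabled. Only in the \emph{next} round, after the border has arrived and the old border node is empty, do the captured White robots satisfy the R4 guard and recolor. The paper's Figure~\ref{fig:border} makes this two-step transition explicit: (b)$\to$(c)$\to$(d) and (e)$\to$(f)$\to$(a). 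Consequently your invariant ``border nodes host only non-White robots'' is false at every other round during the shrinking phase; the correct invariant must admit borders in the transient states $\{R,W\}$ and $\{B,W\}$, and your potential argument must account for the extra round spent waiting for R4a/R4b before the border becomes singly-colored and can move again. (Note also that the recoloring rule that fires is the one matching the \emph{new} color of the arriving border: after R2b the border is Blue, so R4b fires, not R4a.)

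A second, smaller inconsistency: you assert that once the two borders are neighboring ``they carry colors Red and Blue respectively (never two Reds or two Blues),'' yet you then correctly list Case~4, which is precisely the situation where both borders carry the \emph{same} color with a White node between them. The parity of $O_{init}$ forces opposite colors only when the center node is empty; when it is White-occupied, one more flip remains and the borders are same-colored. Your endgame enumeration therefore needs to distinguish these two sub-cases of $\Phi=3$ (center empty versus center White) and derive the color pattern separately in each, as the paper does.
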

\begin{proof}
Let $r_i$ and $r_j$ be the initial border robots on different sides in the initial configuration $C$. Even if they are towers, we can recognize each of them as a robot because we assume they do not crash in the FSYNC model. 
Because all robots have White color in $C$, only $r_i$ and $r_j$ execute rule R1 and become Red in the first round (Fig.~\ref{fig:border}(a)).
After that, $r_i$ and $r_j$ with Red execute R2a or R2b.
If the adjacent node is occupied by White robots (Fig.~\ref{fig:border}(b)), border robots execute R2b, change their color to Blue and move to the adjacent occupied node (Fig.~\ref{fig:border}(c)). 
Then, the White robots captured by the border execute R4b and the border becomes singly-colored Blue (Fig.~\ref{fig:border}(d) or (e)).
Otherwise, they execute R2a and move to their adjacent node, keeping their color (Fig.~\ref{fig:border}(a)).
After the border robots become Blue, they execute R3a or R3b.
If the adjacent node is occupied by White robot (Fig.~\ref{fig:border}(e)), border robots execute R3b, change their color to Red and move to the adjacent occupied node (Fig.~\ref{fig:border}(f)).
Then, the White robots captured by the border execute R4a and the border becomes singly-colored Red (Fig.~\ref{fig:border}(a) or (b)).
Otherwise, they execute R3a and move to their adjacent node, keeping their color (Fig.~\ref{fig:border}(d)).
Thus, $r_i$ and $r_j$ move toward each other, changing their colors Red and Blue repeatedly when they move to an occupied node.
Note that, borders can only move when they are singly-colored.


Let $t$ be the round when the distance between $r_i$ and $r_j$ becomes two, and $C_t$ be the configuration at $t$.
Let $d_i$ (resp. $d_j$) be the distance that $r_i$ (resp. $r_j$) moved before $t$.
Then, it is clear that $M_{init}-3=d_i+d_j$.
In addition, let $c_i$ (resp. $c_j$) be the number of nodes occupied by White robots such that $r_i$ (resp. $r_j$) captured before $t$.
Then, $c_i+c_j$ is at most $O_{init}$.
Because $O_{init}\leq M_{init}$, $t$ is $O(M_{init})$ rounds.

Consider the execution starting from $C_t$.
First, consider the case that there is no White robot between two borders in $C_t$, i.e., the node between two borders is empty.
Because $O_{init}$ is odd, robots in a border have Red and robots in the other border have Blue. 
\begin{itemize}
\item If both borders are singly-colored in $C_t$, then they move toward each other by R2a and R3a respectively (Fig.~\ref{fig:both}(b)). Then, the gathering is achieved.
\item If both borders include White robots in $C_t$, then White robots in both borders execute R4a or R4b respectively (Fig.~\ref{fig:both}(a)) and both border becomes singly-colored at $t+1$.
\item Consider the case that a border includes White and Red robots and the other has only Blue robots in $C_t$ (Fig.~\ref{fig:both2}(a)). 
At $t+1$, White border robots change their color to Red by R4a and Blue border robots move toward the Red border by R3a.
Then, a singly-colored Red border and a singly-colored Blue border are adjacent (Fig.~\ref{fig:both}(c)).
Then, while Red border robots change their color to Blue by R4c, Blue border robots execute R3a, and the gathering is achieved.
\item Consider the case that a border includes White and  Blue robots and the other has only Red robots in $C_t$ (Fig.~\ref{fig:both2}(b)). 
At $t+1$, White border robots change their color to Blue by R4b and Red border robots move toward the Blue border by R2a.
Then, a singly-colored Red border and a singly-colored Blue border are adjacent (Fig.~\ref{fig:both}(c)).
Then, while Red border robots change their color to Blue by R4c, Blue border robots execute R3a, and the gathering is achieved.
\end{itemize}
Next, consider the case that there is a White robot $r_w$ between two borders in $C_t$.
Because $O_{init}$ is odd, both borders have Blue or both borders have Red.
\begin{itemize}
\item If both borders are singly-colored Red (resp. Blue) at $t$ (Fig.~\ref{fig:both3}(d) (resp. (b))), they moves toward $r_w$ by R2b (resp. R3b). Then, the gathering is achieved.
\item If both borders include White robots at $t$ (Fig.~\ref{fig:both3}(a),(c)), the White border robots change their color to the same color as other border robots by R4a or R4b. Then, we finished the discussion about this case.
\item Consider the case that a border is singly-colored Red robots and the other border includes White robots and Red robots at $t$ (Fig.~\ref{fig:one}(d)).
Then, the White border robots execute R4a and the border becomes singly-colored Red at $t+1$.
At the same time, the singly-colored Red border executes R2b, changes its color to Blue and moves to the node occupied by $r_w$ at $t+1$ (Fig.~\ref{fig:one}(c)).
After that, the singly-colored Red border moves to the other border including $r_w$ by R2a and the gathering is achieved, while $r_w$ changes its color to Blue by R4b. 
\item Consider the case that a border is singly-colored Blue robots and the other border includes White robots and Blue robots at $t$ (Fig.~\ref{fig:one}(a)).
Then, the White border robots execute R4b and the border becomes singly-colored Blue at $t+1$.
At the same time, the singly-colored Blue border executes R3b, changes its color to Red and moves to the node occupied by $r_w$ at $t+1$ (Fig.~\ref{fig:one}(b)). 
After that, the singly-colored Blue border moves to the other border including $r_w$ by R3a and the gathering is achieved, while $r_w$ changes its color to Red by R4a. 
\end{itemize}

Therefore, in any case, the gathering is achieved in $O(M_{init})$ rounds.\qed
\end{proof}

\begin{lemma}\label{lem:White}
Starting from a configuration $C$ where $O_{init}$ is odd, even if a White robot crashes, all robots gather in $O(M_{init})$ rounds by Algorithm~\ref{alg2}.
\end{lemma}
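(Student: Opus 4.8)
The plan is to bootstrap from Lemma~\ref{no-crash}. First I would establish the \emph{anchor invariant}: a robot that crashes while White stays at its node and keeps color White forever, and --- since in Algorithm~\ref{alg2} White robots never move and only recolor through R4a/R4b/R5a --- this crashed robot is the only immovable obstacle in the system. Because all crashes occur at the same node, there is a single node $u_c$ occupied by a crashed White robot throughout the whole execution; and up to the moment of the crash the execution coincides with a crash-free execution of Algorithm~\ref{alg2}, so it suffices to analyse what happens from the crash configuration onwards.

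Next I would isolate the two purely local facts that drive the endgame. (a) \emph{A border robot that reaches $u_c$ freezes there}: once a Red or Blue border robot shares $u_c$ with the crashed White robot, $u_c$ is multiply-colored (it carries that border colour and $W$), and one checks against the rule list that none of R0--R5c is then enabled for that border robot (R2a--R3b require the current node singly-colored; R5a--R5c require the unique non-empty side to be singly-colored $R$ or $B$; R0 requires both sides empty), so it stays put. (b) \emph{Any border can enter $u_c$ once adjacent}: a singly-colored border robot one node away from a multiply-colored node has R2a (if Red) or R3a (if Blue) enabled, because a multiply-colored node satisfies both $\neg W!$ and $\neg B!$; and if $u_c$ happens to be singly-colored White, that border has R2b or R3b enabled instead. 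Together (a) and (b) make the endgame absorbing: whichever border reaches $u_c$ first freezes there --- after at most one extra round for any non-crashed White robots at $u_c$ to recolor by R4a/R4b --- and the other border completes the gathering as soon as it arrives adjacent to $u_c$ in a singly-colored state.

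It then remains to show that the other border does reach that adjacency, singly-colored, within $O(M_{init})$ rounds; here I would re-use the case structure of Lemma~\ref{no-crash}, split on where $u_c$ sits when the crash happens. If $u_c$ is a non-border node, the execution keeps mimicking the crash-free one until some border captures $u_c$, after which we are in the absorbing endgame. If $u_c$ is an initial border node, then either every robot there crashes --- in which case $u_c$ stays singly-colored White and the opposite border sweeps in exactly as in Lemma~\ref{no-crash}, entering $u_c$ by R2b or R3b --- or only some crash, in which case the survivors execute R1, turn Red without moving, and $u_c$ is already the frozen multiply-colored node $\{R,W\}$ with the opposite border as the only active one. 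In every branch the surviving active border only ever captures non-crashed White robots, each of which recolors to the border's colour within one round (R4a/R4b), so the border is singly-colored outside of $O(1)$-round capture interludes; the parity hypothesis that $O_{init}$ is odd is invoked exactly as in Lemma~\ref{no-crash} for the part of the execution that mirrors the crash-free one, and geometrically the active border travels at most $M_{init}$ steps and performs at most $O_{init}\le M_{init}$ interludes, so it reaches the node adjacent to $u_c$ singly-colored within $O(M_{init})$ rounds, completing the gathering. Afterwards every robot is at $u_c$ with both sides empty, so only R0 is enabled and the configuration is stable.

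The step I expect to be the main obstacle is the bookkeeping for the branch where only some robots crash at an initial border: the configuration effectively loses one of its two borders, so one must re-run the sweep analysis for a single active border and verify it never lands on a multiply-colored node except $u_c$ itself; and, transversally to all branches, one must verify exhaustively that rules R5a--R5c --- designed for a Red/Blue crash, not a White one --- are never enabled along a White-crash execution, or, when they are, that they only help, since an unexpected R5 firing is the one thing that could derail the clean freeze-then-absorb picture.
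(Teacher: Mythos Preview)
Your proposal is correct and follows essentially the same approach as the paper: both split on whether the crash occurs at an initial border node or an interior node, both observe that the crashed White robot pins its node so that the first arriving border freezes there, and both finish by having the remaining border sweep in via R2a/R2b/R3a/R3b. The paper's proof differs only in presentation---it tracks the exact colour of each approaching border using the parity of $O_{init}$ (and of the index $k$ of the crashed node) to name the specific rule that fires at each step, whereas your freeze-then-absorb abstraction makes that colour tracking unnecessary; your residual worries about R5a--R5c are easily discharged, since those rules require a singly-colored $R$ or $B$ neighbour beyond the border, which never arises in a White-crash execution.
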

\begin{proof}
We assume that all robots have White initially, and there is no rule such that White robot moves in Algorithm~\ref{alg2}.
Thus, we can discuss the case where the crash of the White robot occurs during the execution by the same way as the case the crash occurs initially.

If a robot $r_k$ at a border $u_i$ of the initial configuration crashes, (1) other robots at $u_i$ becomes Red or (2) the border remains White (i.e., all the robots at $u_i$ crashed).
In both cases, the border at $u_i$ cannot move because $r_k$ remains White.
On the other hand, the robots $r_j$ at the other border change their color to Red by R1 and move toward $u_i$ by R2a or R2b.
By repeating executions of R2a, R2b, R3a and R3b, the border $r_j$ eventually reaches the adjacent node $u_{i+1}$ of $u_i$ (Of course, White robots at other nodes than $u_i$ change their color with $r_j$ by R4a or R4b and move as the border).
\begin{itemize}
\item In the case (1), the border at $u_i$ has Red and White robots, and the other border at $u_{i+1}$ is Blue and White, or singly-colored Blue because $O_{init}$ is odd.
In the former case, the White robots at $u_{i+1}$ change their color to Blue by R4b, and the border at $u_{i+1}$ becomes singly-colored Blue.
Then, the singly-colored Blue border moves to $u_i$ by R3a, and the gathering is achieved.
\item In the case (2), the border at $u_{i+1}$ also becomes singly-colored Blue by the same discussion as above. After that, the Blue border robots at $u_{i+1}$ move to $u_i$ by R3b, and the gathering is achieved.
\end{itemize}

Next, consider the case that a robot at a non-border node of the initial configuration $C$ crashes.
Let $o_1$ be a border node $u_i$, $o_2$ be its neighboring occupied node, $o_k$ be the $k$-th occupied node from $u_i$, $o_{(O_{init})}$ be the other border node $u_{(i+M_{init}-1)}$ in $C$.
Let $r_i$ (resp. $r_j$) be the (sets of) initial border robots at $u_i$ (resp. $u_{(i+M_{init}-1)}$) in $C$.
Without loss of generality, the crash occurs at $o_k$. 
Starting from $C$, both borders move toward $o_k$, and eventually at least one border becomes adjacent to $o_k$.
Let $t$ be the round when at least one border becomes adjacent to $o_k$. 
Without loss of generality, then $r_i$ is adjacent to $o_k$ at $t$.
\begin{itemize}
\item Consider the case that $k$ is odd. 
Then, the border $r_i$ includes Blue robots.
\begin{itemize}
\item Consider the case that $r_j$ is also adjacent to $o_k$ at $t$, and both of $r_i$ and $r_j$ include White robots or both do not include White robots. 
Then, $r_j$ also includes Blue robots because $O_{init}$ is odd.
When both borders include White robots, White border robots execute R4b and change their color to Blue. Thus, both borders become singly-colored Blue.
Then, they move to $o_k$ by R3b, and the gathering is achieved.
\item Consider the case that $r_j$ is also adjacent to $o_k$ at $t$, and one of borders includes White robots.
Without loss of generality, assume that $r_i$ includes White robot at $t$. Then, $r_j$ is singly-colored Blue. 
The White robot occupied with $r_i$ executes R4b at $t+1$.
At the same time, $r_j$ executes R3b, changes its color to Red and moves to $o_k$.
After that, $r_i$ moves to $o_k$ by R3a, and the gathering is achieved. 
\item Consider the case that $r_j$ is not adjacent to $o_k$ at $t$. 
Then, $r_i$ executes R3b, changes its color to Red and moves to $o_k$.
After that, because the White robot on $o_k$ crashes, it cannot change its color.
Thus, the border $r_i$ cannot move from $o_k$.
Eventually, $r_j$ with Blue robots arrives at the adjacent node of $o_k$
(If the node is $o_{k+1}$, the White robots execute R4b and the border $r_j$ becomes singly-colored Blue).
Then, $r_j$ moves to $o_k$ by R3a, and the gathering is achieved.
\end{itemize}
\item Consider the case that $k$ is even.
Then, the border $r_i$ includes Red robots.
\begin{itemize}
\item Consider the case that $r_j$ is also adjacent to $o_k$ at $t$, and both of $r_i$ and $r_j$ include White robots or both do not include White robots. 
Then, $r_j$ also includes Red robots because $O_{init}$ is odd.
When both borders include White robots, White border robots execute R4a and change their color to Red. Thus, both borders become singly-colored Red.
Then, they move to $o_k$ by R2b, and the gathering is achieved.
\item Consider the case that $r_j$ is also adjacent to $o_k$ at $t$, and one of borders includes White robots.
Without loss of generality, assume that $r_i$ includes White robot at $t$.
Then, $r_j$ is singly-colored Red.
The White robot occupied with $r_i$ executes R4a at $t+1$.
At the same time, $r_j$ executes R2b, changes its color to Blue and moves to $o_k$ at $t+1$.
After that, $r_i$ moves to $o_k$ by R2a, and the gathering is achieved. 
\item Consider the case that $r_j$ is not adjacent to $o_k$ at $t$. 
Then, $r_i$ executes R2b, changes its color to Blue and moves to $o_k$.
After that, because the White robot on $o_k$ crashes, it cannot change its color.
Thus, the border $r_i$ cannot move from $o_k$.
Eventually, $r_j$ with Red robots arrives at the adjacent node of $o_k$ (If the node is $o_{k+1}$, the White robots execute R4a and the border $r_j$ becomes singly-colored Red).
Then, $r_j$ moves to $o_k$ by R2a, and the gathering is achieved.
\end{itemize}
\end{itemize}
Thus, the lemma holds.\qed
\end{proof}

\begin{lemma}\label{lem:Red}
Starting from a configuration $C$ where $O_{init}$ is odd, even if a Red robot crashes during the execution, all robots gather in $O(M_{init})$ rounds by Algorithm~\ref{alg2}.
\end{lemma}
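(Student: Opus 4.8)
The plan is to follow the structure of the proof of Lemma~\ref{lem:White}, but the argument is necessarily more delicate: unlike White robots, Red robots are mobile, since in Algorithm~\ref{alg2} a robot is Red only as a member of a border robot set — either just after the bootstrap rule R1, or after a border executed R3b and the White robots it captured executed R4a. Thus a Red robot crashing at a node $p$ freezes (a copy of) a border set at $p$, coloured Red, forever. Two structural facts drive the proof. First, because all crashes occur at the same node, no robot may crash anywhere but $p$; hence the opposite border, and any robot of the crashed border that survived, are crash-free as soon as they leave $p$, so from then on they evolve exactly as in the crash-free analysis of Lemma~\ref{no-crash}. Second, the parity argument of Lemma~\ref{lem:White} still governs the (crash-free) opposite border and pins down its colour when it reaches the neighbourhood of $p$.

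The proof then splits on where the Red crash occurs. Since all robots are White at time $0$, it occurs at some round $t\geq 1$, and $p$ is one of: (i) an initial border node, right after its robots executed R1; (ii) an occupied node $o_c$ that the crashed border had just entered via R3b — possibly still co-located with the White robots it was capturing and/or with a surviving remainder that keeps moving; (iii) an empty node inside a gap being traversed by a Red border; or (iv), in the transient case where the crash strikes between the Compute and Move phases of R3b, the node immediately before such an $o_c$. In cases (i) and (iii), and in case (ii) when all robots of the arriving border crash, the frozen set — after the captured White robots (if any) execute R4a or R1 — ends up alone on $p$, making $p$ a singly-Red border node; the parity argument forces the opposite border to arrive at the node adjacent to $p$ coloured Blue, it then enters $p$ by R3a, and the resulting configuration is a double border on which R0 holds, so gathering is achieved at $p$. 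The remaining cases are exactly those in which a partial crash splits the border into the frozen Red part on $p$ and a surviving remainder: the remainder leaves $p$ (by R2a or R2b), is eventually caught up by the opposite border, and — because it ``pre-captured'' some nodes, perturbing the parity — the two pile up on the node $q$ neighbouring $p$ carrying a mixture of Red and Blue robots, possibly together with one still-unconverted White. Such a $q$ is typically \emph{not} a clean border, since $p$ itself lies within visibility of $q$, and this is precisely the situation the special rules were designed for: reading the view of a robot on $q$ with its empty side away from $p$, rule R5a turns the leftover White into Red, while R5b and R5c move, respectively, the Red and the Blue robots of $q$ onto $p$; after at most one further R2a step all robots (including the crashed ones) share $p$, where R0 keeps them.

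The step I expect to be the main obstacle is exactly the bookkeeping for these ``dirty'' intermediate configurations. One must check that every configuration reachable after the crash has an enabled rule — the danger being a deadlocked mixed-colour node flanked by the frozen node on one side and an occupied interior node on the other — that every enabled rule makes monotone progress toward a single occupied node, and that the guards of R5a--R5c (in particular their $\emptyset^\phi$, $(R!,B!)$ and $\emptyset^{\phi-1}$ parts) are indeed met once $p$ has become the unique occupied node on one side of $q$. A secondary difficulty is keeping the colour/parity accounting consistent through a split: the colour of the surviving remainder and that of the opposite border at their meeting point must be shown compatible with R3a, and likewise with R5b/R5c once they reach the neighbour of $p$. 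Finally, each recovery adds only $O(1)$ rounds on top of the $O(M_{init})$ traversal time of Lemma~\ref{no-crash} — the total distance walked by any robot stays $O(M_{init})$ — which yields the claimed bound.
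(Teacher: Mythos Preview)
Your plan follows essentially the same route as the paper: a case analysis on where the Red crash occurs, with the parity of $O_{init}$ determining the colour of the surviving border when it reaches the crash site, and rules R5a--R5c handling the mixed-colour node produced by a partial crash. The paper organises its cases by which rule the crashed robot was executing (just after R1, just before R2a, just before R2b or just after R2a, just before R4c, just after R4a, and the Compute vs.\ Move phase of R3b), whereas you organise by location type, but the underlying reasoning is the same.

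There are, however, two concrete gaps in your case coverage. First, in your case (ii) with ``all robots of the arriving border crash'', the node $p=o_c$ is \emph{not} frozen: the White robots already sitting at $o_c$ are not crashed by hypothesis, and in the next round they execute R4a and become non-crashed Red robots that immediately leave $p$ via R2a or R2b. So this is a partial-crash scenario and belongs in your second bucket, not the first. The paper deals with it by reducing to the ``singly-colored Red border'' cases, which themselves split on whether any non-crashed robot remains at $u_k$. Second, your case (iv) with a full crash is listed but assigned to neither bucket. There, $p$ sits immediately before the still-White node $u_{k+1}$; the opposite border reaches the neighbourhood of $p$ by first entering $u_{k+1}$ via R3b, turning \emph{Red} there, and then stepping onto $p$ by R2a --- not Blue via R3a as your first bucket asserts. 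Your stray ``one further R2a step'' looks like a vestige of this case, but it is not where you placed it. Once these two cases are patched (and, in the partial-crash bucket, you verify the $\emptyset^{\phi-1}$ side of the R5 guards, which requires the opposite border to have already absorbed every node beyond $q$), your outline goes through.
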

\begin{proof}
By the definition of Algorithm~\ref{alg2} and the proof of Lemma~\ref{no-crash}, if a Red robot crashes, it occurs at a border node during the execution. Let $r_k$ be the crashed robot with Red, and $u_k$ be the occupied node by $r_k$. Let $r_j$ be the border robots at the other (non-crashed) border node.
\begin{itemize}
\item Consider the case that the crash occurs just after the execution of R1. 
\begin{itemize}
\item If all robots at $u_k$ crash, the border is singly-colored Red and stops its execution completely.
The other border $r_j$ moves toward $r_k$. 
The neighboring White robot $r_i$ of $r_k$ eventually becomes neighboring to $r_j$, then $r_j$ becomes singly-colored Red. 
After that, $r_j$ moves to the node occupied by $r_i$ by R2b.
Then, $r_i$ changes its color to Blue by R4b.
The Blue border including $r_i$ and $r_j$ moves toward $r_k$ by R3a and eventually moves to $u_k$.
The gathering is achieved.
\item If there is a non-crashed robot $r_i$ in $u_k$ when $r_k$ crashed, then $r_i$ continue to execute, and move toward the other border $r_j$ by R2a or R2b. Let $u_i$ be the node adjacent to $u_k$ where $r_i$ moves. Then, $r_i$ cannot execute any rule before it becomes border.
\begin{itemize}
\item Consider the case that $r_i$ executes R2a. 
Because $O_{init}$ is odd, when $r_j$ reaches to the neighboring White robot of $r_i$, $r_j$ becomes Blue and the border becomes singly-colored Blue.
After that, $r_j$ moves to $u_i$ holding Blue by R3a.
Then, $r_i$ and $r_j$ execute R5b and R5c respectively and move to $u_k$. The gathering is achieved.
\item Consider the case that $r_i$ executes R2b.
Then, $r_i$ is Blue, and there are White robots at $u_i$. 
If $r_j$ moves to $u_i$ at the same time as $r_i$ moves,
White robots at $u_i$ change their color to Blue by R4b, and all robots at $u_i$ move to $u_k$ by R3a. Then, the gathering is achieved.
Otherwise, because $O_{init}$ is odd, when $r_j$ is adjacent to $r_i$, $r_j$ has Red. 
Then, $r_j$ moves to $u_i$ by R2a and $u_i$ has White, Blue and Red robots.
After that, White robot in $u_i$ executes R5a and changes its color to Red, and robots on $u_i$ becomes Blue and Red.
Because $r_k$ is singly-colored Red, robots on $u_i$ execute R5b and R5c, and move to $u_k$.
The gathering is achieved.
\end{itemize}
\end{itemize}

\item Consider the case that the crash occurs just before the execution of R2a. Then, $u_k$ is a border with singly-colored Red robots.
\begin{itemize}
\item Consider the case that its adjacent node $u_{k+1}$ is a border with Blue robots and White robots.  
\begin{itemize}
\item Consider the case that all robots at $u_k$ crash.
Then, White robots at $u_{k+1}$ change their color to Blue by R4b and the border at $u_{k+1}$ becomes singly-colored Blue.
After that, robots at $u_{k+1}$ executes R3a, and the gathering is achieved.
\item Consider the case that there are non-crashed robots $r_i$ at $u_k$.
Then, $r_i$ moves to $u_{k+1}$.
At the same time, White robots at $u_{k+1}$ change their color to Blue by R4b.
Thus, the border $u_{k+1}$ becomes Blue and Red.
In the next round, they moves to $u_k$ by R5b and R5c.
Then, the gathering is achieved.
\end{itemize}
\item If the adjacent node $u_{k+1}$ is empty, we can discuss the same way as the case just after the execution of R1.
\end{itemize}

\item Consider the case that the crash occurs just before the execution of R2b or just after the execution of R2a. Then, we can discuss the same way as the case just after the execution of R1.

\item Consider the case that the crash occurs just before the execution of R4c.
Then, $r_k$ is singly-colored Red border, and the other border $r_j$ is adjacent to $u_k$ and singly-colored Blue.
Thus, $r_j$ moves to $u_k$ by R3a, and the gathering is achieved.

\item Consider the case that the crash occurs just after the execution of R4a.
Then, $r_k$ is a singly-colored Red border, and it is just before the execution of R2a, R2b, or R4c.

\item Consider the case that the crash occurs just after the execution of Compute phase of R3b. 
Then, $r_k$ is adjacent to singly-colored node $u_{k+1}$ with White robots, and $r_k$ changes its color to Red, but does not move. 
\begin{itemize}
\item If all robots at $u_k$ crash, they are Red robots and $u_{k+1}$ is occupied by White robots.
Because $O_{init}$ is odd, when the other borders $r_j$ become adjacent to $u_{k+1}$, then $r_j$ has Blue.
After that, $r_j$ changes its color to Red and moves to $u_{k+1}$ by R3b, and then, White robots at $u_{k+1}$ changes its color to Red by R4a.
Then, $u_{k+1}$ becomes singly-colored Red.
Thus, all robots at $u_{k+1}$ move to $u_k$ by R2a, the gathering is achieved.
\item If there is non-crashed other robots $r_i$ at $u_k$, $r_i$ moves to $u_{k+1}$ with Red color. The other borders $r_j$ move toward $u_{k+1}$.
If $r_j$ moves to $u_{k+1}$ at the same time as $r_i$ moves, White robots at $u_{k+1}$ changes their color to Red by R4a, and all robots at $u_{k+1}$ moves to $u_k$ by R2a. Then, the gathering is achieved.
Otherwise, because $O_{init}$ is odd, when $r_j$ becomes adjacent to $u_{k+1}$, $r_j$ have Blue (Even if there are White robots with them, they eventually become singly-colored Blue by R4b).
After that, $r_j$ moves to $u_{k+1}$ by R3a, then $u_{k+1}$ is occupied by White, Red and Blue robots.
Then, White robots at $u_{k+1}$ changes its color to Red by R5a.
Thus, because all robots at $u_{k+1}$ move to $u_k$ by R5b and R5c, the gathering is achieved.
\end{itemize}

\item Consider the case that the crash occurs just after the execution of Move phase of R3b. Then, $r_k$ moved to an adjacent singly-colored node $u_{k}$ with White robots at round $t$.
\begin{itemize}
\item If $r_j$ is also adjacent to $u_{k}$ and is singly-colored Blue at $t$, then $r_j$ also moves to $u_{k+1}$ by R3b at the same time, and the gathering is achieved.
\item If $r_j$ is adjacent to $u_k$ and is with White robots at $t+1$, White robots in both borders execute R4a and R4b, then both borders becomes singly-colored. Then, $r_j$ moves to $u_k$ by R3a and the gathering is achieved. 
\item If $r_j$ is adjacent to $u_k$ and is singly-colored Blue at $t+1$, then it moves to $u_k$ by R3a. Then, the gathering is achieved.
\item If $r_j$ is not adjacent to $u_k$ at $t+1$, White robots at $u_k$ changes its color to Red by R4a and all robots at $u_k$ becomes Red.
After that, we can discuss this case by the same way as the above cases such that crash occurs in a singly-colored Red border.
\end{itemize}
\end{itemize}
Thus, the lemma holds.\qed
\end{proof}

\begin{lemma}\label{lem:Blue}
Starting from a configuration $C$ where $O_{init}$ is odd, even if a Blue robot crashes during the execution, all robots gather in $O(M_{init})$ rounds by Algorithm~\ref{alg2}.
\end{lemma}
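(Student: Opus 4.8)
Here is how I would attack Lemma~\ref{lem:Blue}.

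The plan is to mirror the proof of Lemma~\ref{lem:Red}, systematically swapping the colours Red and Blue together with the matching rule pairs R2a$\leftrightarrow$R3a, R2b$\leftrightarrow$R3b, R4a$\leftrightarrow$R4b. The one ingredient of Algorithm~\ref{alg2} that is \emph{not} invariant under this swap is R4c (which recolours a singly-Red border adjacent to a singly-Blue border to Blue, with no Blue-to-Red twin), so the Blue case will not be a verbatim mirror of the Red case and the effect of this asymmetry has to be tracked. As in Lemma~\ref{lem:Red}, I would first note (from the rules of Algorithm~\ref{alg2} and the proof of Lemma~\ref{no-crash}) that the crashed Blue robot is on a border node; fix that robot $r_k$, let $u_k$ be its node, and let $r_j$ be the robots of the opposite border. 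I then split on the rule $r_k$ was about to execute, i.e.\ on the phase of the crash: $r_k$ has just become Blue (via the Move of R2b, or via R4b or R4c); $r_k$ is at a Blue border about to execute, or has just executed, R3a; or $r_k$ is at a Blue border about to execute R3b (after the Compute of R3b the robot is already Red, so that sub-branch is covered by Lemma~\ref{lem:Red}). Inside each branch I further split on whether \emph{all} robots of $u_k$ crash --- so $u_k$ freezes as a pure-Blue border --- or some survive, in which case the survivors leave $u_k$ on the next round and either become a new mobile border or get ``parked'' one node inside; and on the contents of the node just inside $u_k$ (empty, White-occupied, or already holding $r_j$).

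The engine of the argument in every branch is the monotonicity-plus-parity observation already used in Lemmas~\ref{no-crash} and~\ref{lem:Red}: since $r_k$ is frozen, only $r_j$ can absorb the White nodes between the two borders, so $r_j$ marches toward $u_k$ (capturing each White node it meets and recolouring the captured White robots via R4a/R4b), and the parity of the number of occupied nodes between and including the two borders --- odd at the outset because $O_{init}$ is odd, and flipped by every capture together with the colour of the capturing border --- pins down the colour $r_j$ has when it reaches the node adjacent to $u_k$ (or to a parked survivor group). If $r_j$ arrives Blue it plays R3a and steps onto the frozen Blue $u_k$; if it arrives Red it plays R4c (becoming Blue) and then R3a onto $u_k$ the following round, and crucially no ``two adjacent singly-Blue borders swap forever'' livelock arises because $u_k$ is frozen and never steps back. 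When a survivor group of $u_k$ is parked in between, the meeting node transiently carries both colours; then R5a first recolours any stranded White robot to Red, and R5b/R5c pull the two-coloured node onto the frozen border --- exactly what R5a--R5c were designed for, and (as in Lemma~\ref{lem:Red}) I expect them to be needed only in these crash-induced configurations. The round bound is immediate: $r_j$ travels at most $M_{init}$ hops, so $O(M_{init})$ rounds suffice.

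The step I expect to be the main obstacle --- and the one I would attack first --- is the branch in which some robots of $u_k$ survive and get parked on the node $u_{k+1}$ immediately inside the frozen Blue $u_k$, while $u_{k+2}$ still holds a White node and $r_j$ is still far away. Once $r_j$ finally absorbs that White node it ends up on $u_{k+2}$, adjacent to the parked (Blue) group on $u_{k+1}$, with the frozen Blue $u_k$ one node further in; here the missing Blue-to-Red counterpart of R4c bites, since for $\phi \ge 2$ the suffix $\emptyset^{\phi-1}$ of R4c fails (the node $u_k$ is occupied) and R5b/R5c need both colours on the \emph{same} node, so $r_j$ can make progress only if it is Blue and can play R3a onto the parked group. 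The subtlety is that freezing $r_k$ creates one extra occupied node without the colour flip that a capture would provide, which inverts the ``an odd number of occupied nodes between the borders $\Leftrightarrow$ the borders share a colour'' invariant underlying Lemmas~\ref{no-crash} and~\ref{lem:Red} for the remainder of the run, and a naive count then predicts $r_j$ arrives \emph{Red} here, hence stuck. So the crux is either to isolate a finer invariant that still forces $r_j$ to Blue in exactly these configurations, or --- the clean alternative --- to add to Algorithm~\ref{alg2} the Blue-to-Red twin of R4c, the rule $\emptyset^\phi[B!](R!)(\emptyset^{\phi-1})$ with statement $R$, after which the analysis mirrors the Red case exactly. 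The case $\phi = 1$ I would treat separately: there $\emptyset^{\phi-1}$ is vacuous, R0 pre-empts the ``distance-two'' rules, the occupied set remains a contiguous block, no parking ever occurs, and the argument is routine.
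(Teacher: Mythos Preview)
Your plan of attack is exactly the paper's: a case split on which rule $r_k$ was about to execute (R3a, R3b, or was freshly Blue via R2b/R4b/R4c), a sub-split on whether all robots at $u_k$ crash or some survive and get ``parked'' one node inward, and a parity count on the captured White nodes to determine the colour $r_j$ carries on arrival. Your observation that the ``after Compute of R3b'' branch belongs to the Red lemma is correct and is how the paper partitions the cases too; conversely the ``after Compute of R2b'' branch (robot turns Blue but fails to move) lands in this lemma, and the paper treats it explicitly.

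The one substantive divergence is precisely the sub-case you flag. In the paper's proof of this branch (crash just before R3a, adjacent $u_{k+1}$ empty, survivors $r_i$ park Blue at $u_{k+1}$), the paper also computes that $r_j$ arrives \emph{Red} at $u_{k+2}$ --- so abandon the search for a finer invariant forcing $r_j$ to be Blue; your ``naive'' parity count is the correct one. The paper then asserts that $r_j$ steps onto the singly-Blue $u_{k+1}$ \emph{via R2a}, producing a Red/Blue node on which R5b/R5c fire toward the frozen $u_k$. Your objection that R2a's guard $\begin{pmatrix}\neg W!\\ \neg B!\end{pmatrix}$ excludes a singly-Blue neighbour is well-taken under the paper's own definition of that predicate, so either the intended guard of R2a is the more permissive $\neg W!$ (symmetric to R3a), or this is a gap in the algorithm itself rather than merely in your proof. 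In the latter case your proposed twin of R4c is the right repair to the \emph{algorithm}, but it does not establish the lemma for Algorithm~\ref{alg2} as written; you should flag this rather than silently patch the rule set.
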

\begin{proof}
By the definition of Algorithm~\ref{alg2} and the proof of Lemma~\ref{no-crash}, if a Blue robot crashes, it occurs at a border node during the execution. Let $r_k$ be the crashed robot with Blue, and $u_k$ be the occupied node by $r_k$. Let $r_j$ be the other (non-crashed) border robots.
\begin{itemize}
\item Consider the case that the crash occurs just before the execution of R3a.
Then, $r_k$ is a singly-colored Blue border.
\begin{itemize}
\item Consider the case that the adjacent node $u_{k+1}$ is a singly-colored Red border.
\begin{itemize}
\item If all robots in $u_k$ crash, Red robots at $u_{k+1}$ change their color to Blue by R4c, and move to $u_k$ by R3a.
Then, the gathering is achieved.
\item If there are non-crashed Blue robots $r_i$ in $u_k$,
$r_i$ executes R3a, and moves to $u_{k+1}$.
At the same time, Red robots at $u_{k+1}$ changes their color to Blue by R4c.
Then, all robots at $u_{k+1}$ are only non-crashed Blue robots, and all robots at $u_k$ are only crashed Blue robots.
Thus, all robots at $u_{k+1}$ moves to $u_k$ by R3a.
Thus, the gathering is achieved.
\end{itemize}
\item Consider the case that the adjacent node $u_{k+1}$ has White and Red robots.
\begin{itemize}
\item If all robots at $u_{k}$ crash, White robots at $u_{k+1}$ change their color by R4a and the border at $u_{k+1}$ becomes singly-colored Red.
After that, the robots at $u_{k+1}$ execute R4c and R3a in sequence, then the gathering is achieved.
\item If there are non-crashed Blue robots $r_i$ in $u_k$,
$r_i$ executes R3a, and moves to the node $u_{k+1}$.
At the same time, White robots at $u_{k+1}$ change their color to Red by R4b.
Thus, the border becomes Blue and Red.
In the next round, they moves to $u_k$ by R5b and R5c, and the gathering is achieved.
\end{itemize}

\item Consider the case that the adjacent node $u_{k+1}$ is empty.
\begin{itemize}
\item If all robots $r_k$ at $u_k$ crash, $r_j$ moves toward $u_k$.
Eventually, $r_j$ becomes adjacent to $u_k$ and singly-colored Red.
After that, $r_j$ becomes singly-colored Blue by R4c, and moves to $u_k$ by R3a.
Then, the gathering is achieved.
\item If there are non-crashed robots $r_i$ at $u_k$ when $r_k$ crashes, $r_i$ moves to the adjacent node $u_{k+1}$ by R3a toward $r_j$.
After that, $r_i$ cannot execute any rule before it becomes border. 
When $r_j$ becomes adjacent to $u_{k+1}$, $r_j$ has Red because $O_{init}$ is odd.
Then, $r_j$ moves to $u_{k+1}$ by R2a, and the border robots at $u_{k+1}$ have Red and Blue robots.
They move to $u_k$ by R5b and R5c, and the gathering is achieved.
\end{itemize}
\end{itemize}

\item Consider the case that the crash occurs just before the execution of R3b. In this case, $r_k$ is also a singly-colored Blue border, and the adjacent occupied node $u_{k+1}$ has singly-colored White robots $r_w$. 
\begin{itemize}
\item Consider the case that all robots at $u_k$ crash, then $r_j$ eventually becomes adjacent to $u_{k+1}$.
Then, $r_j$ has Blue because $O_{init}$ is odd.
If $r_j$ includes White robots, the White robots executes R4b and $r_j$ becomes singly-colored Blue.
Then, $r_j$ executes R3b and moves to $u_{k+1}$.
After that, $r_w$ executes R4a and the border at $u_{k+1}$ becomes singly-colored Red.
Then, robots at $u_{k+1}$ execute R4c and R3a in sequence, and the gathering is achieved. 
\item Consider the case that there are non-crashed border robots $r_i$ in $u_k$ when $r_k$ crashes, $r_i$ continues to execute and moves to $u_{k+1}$ by R3b. Then, $r_i$ become Red and move to $u_{k+1}$ at time $t$. 
Because $O_{init}$ is odd, when $r_j$ is neighboring to $r_w$, $r_j$ has Blue. 
If $r_j$ is also adjacent to $u_{k+1}$ and moves to $u_{k+1}$ at $t$, then we can discuss the case in the same way as above.  Otherwise, $r_i$ cannot execute any rule before it becomes border.
After $r_j$ becomes singly-colored Blue, it moves to $u_{k+1}$ by R3a, and $u_{k+1}$ becomes a border with
White, Red and Blue robots.
After that, White robots $r_w$ at $u_{k+1}$ change their color to Red by R5a, and then, there are only Blue and Red robots at $u_{k+1}$.
Then, all robots at $u_{k+1}$ move to $u_k$ by R5b and R5c, and the gathering is achieved.
\end{itemize}

\item Consider the case that the crash occurs just after the execution of R3a or R4b. Then, $r_k$ is a singly-colored Blue border.
We can discuss it by the same way as the case just before the execution of R3a or R3b.

\item Consider the case that the crash occurs just after the execution of R4c.
By the proof of Lemma~\ref{no-crash}, the gathering is achieved.

\item Consider the case that the crash occurs just after the execution of Compute phase of R2b. Then, just before the execution, the color of $r_k$ is Red, and $r_k$ is adjacent to singly-colored node $u_{k+1}$ with White robots.
After the execution, $r_k$ changes its color to Blue, but does not move.
The other borders $r_j$ move toward $u_{k+1}$.
Because $O_{init}$ is odd, when $r_j$ become neighboring to $u_{k+1}$, $r_j$ have Red (Even if there are White robots with them, they eventually become singly-colored Red by R4a).
\begin{itemize}
\item If all robots at $u_k$ crash,
when $r_j$ moves to $u_{k+1}$ by R2b, then $u_{k+1}$ is occupied by White and Blue robots.
Then, White robots at $u_{k+1}$ change their color to Blue by R4b.
Thus, because all robots at $u_{k+1}$ move to $u_k$ by R3a, the gathering is achieved.
\item If there are non-crashed other robots $r_i$ at $u_k$ just before the execution of R2b, $r_i$ moves to $u_{k+1}$ with Blue color.
If $r_j$ moves to $u_{k+1}$ at the same time as $r_i$ moves, White robots at $u_{k+1}$ change their color to Blue by R4b, and all robots at $u_{k+1}$ move to $u_k$ by R3a. Then, the gathering is achived.
Otherwise, $r_j$ moves to $u_{k+1}$ by R2a, then $u_{k+1}$ is occupied by White, Red and Blue robots.
Then, White robots at $u_{k+1}$ change their color to Red by R5a.
Thus, because all robots at $u_{k+1}$ move to $u_k$ by R5b and R5c, the gathering is achieved.
\end{itemize}

\item Consider the case that the crash occurs just after the execution of Move phase of R2b. Then, $r_k$ moved to an adjacent singly-colored node with White robots. Then, White robots at $u_k$ changes its color to Blue by R4b, and let $t$ be the round.
If $r_j$ is adjacent to $u_k$ at $t$, $r_j$ moves to $u_k$ by R2a and the gathering is achieved at $t$. 
If $r_j$ is not adjacent to $u_k$ at $t$, $u_k$ becomes singly-colored Blue.
After that, we can discuss this case by the same way as the above cases such that crash occurs in a singly-colored Blue border.
\end{itemize}
Thus, the lemma holds.\qed
\end{proof}

From Lemmas \ref{no-crash}--\ref{lem:Blue}, we can deduce: 

\begin{theorem}
Starting from a configuration $C$ where $O_{init}$ is odd, Algorithm~\ref{alg2} solves the SUIG problem on line-shaped networks without multiplicity detection in $O(M_{init})$ rounds.
\end{theorem}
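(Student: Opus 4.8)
The plan is to obtain the theorem as a direct corollary of Lemmas~\ref{no-crash}--\ref{lem:Blue}, once the set of possible crash scenarios has been partitioned to match those four lemmas. First I would argue that a crash is fully characterised, for the purpose of the analysis, by the node at which it happens and the color the crashed robot carries at that moment: by the problem definition all crashes occur at one node, a crashed robot keeps its color indefinitely, and robots sharing a node and a color behave identically and are indistinguishable, so several crashes at that node are equivalent, for the analysis, to a single one. Since the initial configuration is monochromatic White and no rule of Algorithm~\ref{alg2} ever moves a White robot, a White crash occurring during the execution can be treated exactly as one already present in the initial configuration. Hence every execution falls into one of four mutually exhaustive cases: no crash (Lemma~\ref{no-crash}), a White robot crashes (Lemma~\ref{lem:White}), a Red robot crashes (Lemma~\ref{lem:Red}), or a Blue robot crashes (Lemma~\ref{lem:Blue}).

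Second, each of those lemmas already states that all robots --- including the crashed one --- occupy a single node after $O(M_{init})$ rounds; it remains only to check that this configuration is terminal, which follows because the gathered configuration satisfies the guard of R0 ($\emptyset^\phi[?]\emptyset^\phi$) and of no other rule, so every robot stays idle forever and the ``for all $t' \ge t$'' requirement of the SUIG definition is met. Putting the four cases together yields the theorem, with the time bound being the maximum of the four $O(M_{init})$ bounds; here one uses $O_{init} \le M_{init}$ to absorb the rounds spent capturing intermediate White-occupied nodes, and the single-crash assumption to bound the extra travel caused by one border stalling at the crash node while the other traverses (at most) the whole segment between the initial borders.

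I expect essentially no difficulty in this assembly step: the real work is in Lemmas~\ref{lem:Red} and~\ref{lem:Blue}, where one must verify that a crash landing at any of the many sub-moments of a Look--Compute--Move cycle (in particular between the Compute and Move phases of R2b or R3b, or just after R1, R4a, R4b, R4c) still leads to a configuration that the recovery rules R5a--R5c drive to gathering while preserving the Red/Blue complementarity guaranteed by the parity of $O_{init}$. Granting those lemmas, the theorem follows immediately, and the parity hypothesis is also what reconciles the result with the impossibility of Corollary~\ref{C14}.
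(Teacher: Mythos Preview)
Your proposal is correct and matches the paper's own proof, which is literally the one-line remark that the theorem follows from Lemmas~\ref{no-crash}--\ref{lem:Blue}. Your additional justification (exhaustiveness of the four cases, terminality via R0, and the $O(M_{init})$ bound) is sound and simply makes explicit what the paper leaves implicit.
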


\section{Conclusion}\label{sec:conc}
We presented the first stand-up indulgent gathering algorithms for myopic luminous robots on line graphs. One is for the case where $M_{init}$ is odd, while the other is for the case where $O_{init}$ is odd. 
The hypotheses used for our algorithms closely follow the impossibility results found for the other cases. 

Some interesting questions remain open:
\begin{enumerate}
    \item Are there any algorithms for the case where crashed robots are located at different nodes? (In that case, one has to weaken the gathering specification, e.g., by requiring each correct robot to eventually gather at a crashed location, if any)
    \item Are there any deterministic algorithms for the case where $M_{init}$ and $O_{init}$ are even (Such solutions would have to avoid starting or ending up in edge-view-symmetric situations)?
    \item Are there any algorithms for the case where $O_{init}$ is odd that use fewer colors than ours? 
\item We present distinct solutions for the cases where $M_{init}$ is odd and $O_{init}$ is odd. It would be interesting to design a single algorithm that handles both cases.
\end{enumerate}
\bibliographystyle{spmpsci}
\bibliography{ref}

\begin{thebibliography}{10}
\providecommand{\url}[1]{{#1}}
\providecommand{\urlprefix}{URL }
\expandafter\ifx\csname urlstyle\endcsname\relax
  \providecommand{\doi}[1]{DOI~\discretionary{}{}{}#1}\else
  \providecommand{\doi}{DOI~\discretionary{}{}{}\begingroup
  \urlstyle{rm}\Url}\fi

\bibitem{ND2006}
Agmon, N., Peleg, D.: Fault-tolerant gathering algorithms for autonomous mobile
  robots.
\newblock SIAM Journal on Computing \textbf{36}(1), 56--82 (2006)

\bibitem{TPRLSX19}
Balabonski, T., Courtieu, P., Pelle, R., Rieg, L., Tixeuil, S., Urbain, X.:
  Continuous vs. discrete asynchronous moves: {A} certified approach for mobile
  robots.
\newblock In: M.F. Atig, A.A. Schwarzmann (eds.) Networked Systems ({NETYS}),
  Revised Selected Papers, \emph{Lecture Notes in Computer Science}, vol.
  11704, pp. 93--109. Springer (2019)

\bibitem{ZSS2013}
Bouzid, Z., Das, S., Tixeuil, S.: Gathering of mobile robots tolerating
  multiple crash faults.
\newblock In: Proc. IEEE 33rd International Conference on Distributed Computing
  Systems (ICDCS), pp. 337--346 (2013)

\bibitem{BramasKLT23}
Bramas, Q., Kamei, S., Lamani, A., Tixeuil, S.: Stand-up indulgent gathering on
  lines.
\newblock In: S.~Dolev, B.~Schieber (eds.) Stabilization, Safety, and Security
  of Distributed Systems (SSS), \emph{Lecture Notes in Computer Science}, vol.
  14310, pp. 451--465. Springer (2023)

\bibitem{QAS2020}
Bramas, Q., Lamani, A., Tixeuil, S.: Stand up indulgent rendezvous.
\newblock In: S.~Devismes, N.~Mittal (eds.) Stabilization, Safety, and Security
  of Distributed Systems (SSS), \emph{Lecture Notes in Computer Science}, vol.
  12514. Springer (2020)

\bibitem{QAS2021}
Bramas, Q., Lamani, A., Tixeuil, S.: Stand {U}p {I}ndulgent {G}athering.
\newblock In: L.~Gąsieniec, R.~Klasing, T.~Radzik (eds.) Algorithms for Sensor
  Systems (ALGOSENSORS), \emph{Lecture Notes in Computer Science}, vol. 12961
  (2021)

\bibitem{BramasLT23}
Bramas, Q., Lamani, A., Tixeuil, S.: Stand up indulgent gathering.
\newblock Theoretical Computer Science \textbf{939}, 63--77 (2023)

\bibitem{QS2015}
Bramas, Q., Tixeuil, S.: Wait-free gathering without chirality.
\newblock In: C.~Scheideler (ed.) Proc. 22nd Structural Information and
  Communication Complexity (SIROCCO), \emph{Lecture Notes in Computer Science},
  vol. 9439, pp. 313--327. Springer (2015)

\bibitem{CourtieuRTU15}
Courtieu, P., Rieg, L., Tixeuil, S., Urbain, X.: Impossibility of gathering, a
  certification.
\newblock Information Processing Letters \textbf{115}(3), 447--452 (2015)

\bibitem{XMP2020}
D\'efago, X., Potop-Butucaru, M., Raipin-Parv\'edy, P.: Self-stabilizing
  gathering of mobile robots under crash or byzantine faults.
\newblock Distributed Computing \textbf{33}, 393--421 (2020)

\bibitem{DefagoPT19}
D{\'{e}}fago, X., Potop{-}Butucaru, M., Tixeuil, S.: Fault-tolerant mobile
  robots.
\newblock In: P.~Flocchini, G.~Prencipe, N.~Santoro (eds.) Distributed
  Computing by Mobile Entities, Current Research in Moving and Computing,
  \emph{Lecture Notes in Computer Science}, vol. 11340, pp. 234--251. Springer
  (2019)

\bibitem{PGN2019}
Flocchini, P., Prencipe, G., Santoro, N. (eds.): {D}istributed {C}omputing by
  {M}obile {E}ntities, {C}urrent {R}esearchin {M}oving and {C}omputing,
  \emph{Lecture Notes in Computer Science}, vol. 11340.
\newblock Springer (2019)

\bibitem{KLO14}
Kamei, S., Lamani, A., Ooshita, F.: Asynchronous ring gathering by oblivious
  robots with limited vision.
\newblock In: Proc. IEEE 33rd International Symposium on Reliable Distributed
  Systems Workshops (SRDSW), pp. 46--49 (2014)

\bibitem{OPODIS2019}
Kamei, S., Lamani, A., Ooshita, F., Tixeuil, S., Wada, K.: Gathering on rings
  for myopic asynchronous robots with lights.
\newblock In: Proc. 23rd International Conference on Principles of Distributed
  Systems (OPODIS), 27 (2019)

\bibitem{REA2008}
Klasing, R., Markou, E., Pelc, A.: Gathering asynchronous oblivious mobile
  robots in a ring.
\newblock Theoretical Computer Science \textbf{390}(1), 27--39 (2008)

\bibitem{SF-Ex}
Ooshita, F., Tixeuil, S.: Ring exploration with myopic luminous robots.
\newblock In: T.~Izumi, P.~Kuznetsov (eds.) Stabilization, Safety, and Security
  of Distributed Systems (SSS), \emph{Lecture Notes in Computer Science}, vol.
  11201, pp. 301--316. Springer (2018)

\bibitem{SuzukiY99}
Suzuki, I., Yamashita, M.: Distributed anonymous mobile robots: Formation of
  geometric patterns.
\newblock {SIAM} Journal on Computing \textbf{28}(4), 1347--1363 (1999)

\end{thebibliography}

\end{document}